\documentclass[a4paper]{amsproc}
\usepackage{amssymb}
\usepackage{amscd}
\usepackage[dvips]{graphicx}
\usepackage[all,cmtip]{xy}

 \newtheorem{thm}{Theorem}[section]
 
 \newtheorem{lem}{Lemma}[section]

 \newtheorem{exm}{Example}[section]
 \newtheorem{rem}{Remark}[section]

\numberwithin{equation}{section}

\def\ji {\char'032}

\def\m  {\char'176}

\font\rrm=wncyr8%
\font\rit=wncyi8

\newcommand{\A}{A^{-1}}
\newcommand{\R}{\mathbb{R}}

\DeclareMathOperator{\Ad}{\mathrm{Ad}}
\DeclareMathOperator{\Span}{\mathrm{span\,}}
\DeclareMathOperator{\diag}{\mathrm{diag}}

\textheight=8.4 true in
   \textwidth=5.2 true in
    \topmargin 30pt

%

\title{Heisenberg model in pseudo--Euclidean spaces II}

\subjclass[2010]{70H06, 37J35, 37J55, 70H45}
\keywords{discrete systems with constraints, contact integrability, billiards, Neumann and Heisenberg systems}
\author[Bo\v zidar Jovanovi\'c \and Vladimir Jovanovi\'c]{}

\email{bozaj@mi.sanu.ac.rs}
\email{vlajov@blic.net}

\begin{document}

\maketitle

\centerline{\scshape Bo\v zidar Jovanovi\'c}
\medskip
{\footnotesize
\centerline{Mathematical Institute SANU}
\centerline{Serbian Academy of Sciences and Arts}
\centerline{Kneza Mihaila 36, 11000 Belgrade, Serbia}
}

\medskip

\centerline{\scshape Vladimir Jovanovi\'c}
\medskip
{\footnotesize
 \centerline{Faculty of Sciences}
   \centerline{University of Banja Luka}
   \centerline{Mladena Stojanovi\'ca 2, 51000 Banja Luka, Bosnia
and Herzegovina}
}

\bigskip

\begin{abstract}
In the review we describe a
relation between the Heisenberg spin chain
model on pseudospheres and light--like cones in pseudo--Eucli\-dean spaces
and virtual billiards.
A geometrical interpretation of the integrals associated to a family of confocal quadrics is given, analogous to
Moser's geometrical interpretation of the integrals of the Neumann system on the sphere.
\end{abstract}

\tableofcontents

\section{Introduction}

In the paper we round up
our study of geometry of discrete (contact)
integrable systems with constraints starting with the Heisenberg system in pseudo--Euclidean spaces $\mathbb E^{k,l}$ (see \cite{Jov2}) and continued with billiard system within ellipsoid \cite{JJ}, i.e, virtual billiard system within quadrics in $\mathbb E^{k,l}$ \cite{JJ3}.\footnote{A draft of Section 2 of the current paper is given as the Section 5 in the first arXive version of \cite{JJ3} [arXiv:1510.04037v1].}

It is well known that the Heisenberg system on a sphere can be seen as a square root of the ellipsoidal billiard  \cite{MV, Ves3}, as well that it can be seen as a B\"acklund transformation of the Neumann system \cite{Su}. For the latter, Moser gave a nice geometrical interpretation of integrability (e.g., see \cite{Mum}). We feel that it would be interesting to formulate analogous pseudo-Euclidean statements. In this sense we compiled a review paper, with some additional analysis concerning mostly the light--like case.  We note that integrable discretizations are  usually considered for complexified objects. Here we work within real domains. For example, the Moser--Veselov skew hodograph mapping naturally follows from the requirement that a quadratic generating function defines a symplectic mapping for real objects (see Lemma \ref{karakterizacija}). As an example we obtain the symplectic billiard mapping for the ellipsoid recently introduced in \cite{AT} (see Example \ref{ST0}).

We consider the Heisenberg model on a pseudosphere
(light--like cone)
\[
S_c^{n-1}=\{q\in\mathbb E^{k,l}\,\vert\,\langle q,q\rangle=c\}, \qquad c=\pm 1, 0
\]
in a pseudo--Euclidean space $(\mathbb E^{k,l}, \langle\cdot,\cdot\rangle)$ of signature $(k,l)$, $k+l=n$ (see \cite{Jov2}).
It
is defined as a discrete Lagrangian system given by the action
functional
\[
\mathrm S[\mathbf q]=\sum \mathbf{L}(q_k,q_{k+1}), \qquad \mathbf
L(q_k,q_{k+1})=\langle q_{k},J q_{k+1}\rangle,
\]
where $\mathbf q=(q_k), \, k\in\mathbb Z$ is a sequence of points
on $S^{n-1}_c$ and $J=\diag(J_1,\dots,J_{n})$, $\det J\ne 0$.
In the Euclidean case the functional defines the energy of a spin
chain of the Heisenberg model, see Veselov
\cite{Ves3}.

The equations of the stationary configuration
have the form
\begin{equation}\label{neumann}
\frac{\partial \mathbf L(q_k,q_{k+1})}{\partial q_k}+
\frac{\partial \mathbf L(q_{k-1},q_{k})}{\partial q_k}=
EJ q_{k+1}+EJ q_{k-1}=\lambda_k E q_k, \qquad k\in\mathbb Z,
\end{equation}
where\footnote{We hope that it will be clear from the context when $k$ denotes the discrete time, and when the
signature of the metric.}
\[
E=\diag(\tau_1,\dots,\tau_n), \quad \tau_i=1, \quad i=1,\dots,k, \quad \tau_i=-1,\quad i=k+1,\dots,n.
\]

The multipliers
\begin{equation}\label{lambda}
\lambda_k=2\langle J^{-1} q_k,q_{k-1}\rangle/\langle J^{-2}
q_k,q_k\rangle
\end{equation}
are determined by the
constraints $\langle q_k,q_k\rangle=c$, and they are
defined outside the singular set $\langle J^{-2} q_k,q_k
\rangle=0$.

The equations \eqref{neumann}, \eqref{lambda} determine the
symplectic mapping
\[
\Phi\colon P_c\to P_c, \qquad \Phi(q_{k-1},q_k)=(q_k,q_{k+1})
\]
with respect to the 2-form $\Omega=\sum_i \tau_i J_i dQ_i\wedge dq_i$, where
\begin{align*}
P_c(q,Q)\colon  \quad & \langle q,q\rangle=c, \qquad\quad
\langle Q,Q\rangle=c,\qquad\,\, c=\pm 1, 0,\\
&\langle q,J^{-1}Q\rangle\ne 0, \quad \langle q,J^{-2}q\rangle\ne 0, \quad \langle Q,J^{-2}Q\rangle\ne 0
\end{align*}
(see \cite{Jov2}).\footnote{Actually, the function $\langle q_k,J^{-1}q_{k+1}\rangle$ is the first integral \cite{Jov2}, and so the
condition $\langle q_k,J^{-1}q_{k+1}\rangle\ne 0 $ is invariant of the dynamics, while
$\langle J^{-2} q_k,q_k
\rangle\ne 0$ is not. If $\langle J^{-2} q_{k+1},q_{k+1}
\rangle=0$, by definition the flow stops. In this sense, in the codomain of $\Phi$ we should take the manifold defined without the assumption $\langle Q,J^{-2}Q\rangle\ne 0$.}
It is a completely integrable discrete Hamiltonian system.
For $J^2_j\ne J^2_i$, the integrals can be written in the form
\begin{equation}\label{int-neumann}
f_i(q_{k-1},q_{k})=c\cdot \tau_i (q_{k-1})_i^2+\sum_{j\neq
i}\frac{\tau_i\tau_j((Jq_k)_j(q_{k-1})_i-(q_{k-1})_j(Jq_k)_i)^2}{
J^2_i- J^2_j},
\end{equation}
$i=1,\dots,n$, with the relation $\sum_i f_i \equiv c^2$ among them.
Furthermore,
on the light--like cone, the mapping $\Phi$ leads to an integrable contact system
as well (see \cite{Jov2}).

\subsection*{Outline and results of the paper.}
In the Euclidean case, there is a remarkable relation between the
ellipsoidal billiard and the Heisenberg spin chain model established by the
use of so the called skew hodograph mapping (see Moser and Veselov
\cite{MV}). Recently, in \cite{Jov3}, a simple observation concerning generating functions for systems with constraints (see Theorem \ref{prva})
is used for another interpretation of the skew-hodograph mapping.
Following \cite{Jov3}, we establish analogous relation between virtual billiards and the pseudo--Euclidean Heisenberg model, which also includes the symmetries of the system  (Theorem \ref{druga}, Section 2).
As a by-product, we obtain the symplectic billiard within ellipsoid given in \cite{AT} (Example \ref{ST0}), as well as a "big" $n\times n$--matrix representations of the
virtual billiard flow
(Theorem \ref{treca}, Section 2).

Further, in Sections 3 and 4, as a straightforward generalization of the Euclidean case (see \cite{Su}),  we consider a discrete Legendre transformation of the Heisenberg model and define the associated 1:2 symplectic correspondence on
the domains $\mathfrak M_{\pm 1}$ of the cotangent bundle of pseudospheres $S^{n-1}_{\pm 1}$ (Theorem \ref{peta}),
i.e, the domain $\mathfrak M_0$ of the cotangent bundle of a light--like cone $S^{n-1}_0$ (Theorem \ref{osma}). The small
$2\times 2$--matrix representations for the systems are also given
(Theorems \ref{sedma}, \ref{deveta}).

We show that the Heisenberg model on
$\mathfrak M_{\pm 1}$ is a B\"acklund transformation (Theorem \ref{sedma}) of the
integrable variant of the Neumann system in pseudo--Euclidean
spaces described in Theorem \ref{cetvrta}.
On the other hand, the Heisenberg
model on $\mathfrak M_0$ has a one-parameter family of invariant
hypersurfaces $\Sigma_\kappa$. The restriction
of the correspondence  to $\Sigma_\kappa$ is a natural example of
completely integrable contact system (Theorem \ref{deseta}).

Motivated by Moser's geometric interpretation of the integrals of
the Neumann system on a sphere (see \cite{Mum}),
in section 5 we consider
the following
pseudo--confocal family of quadrics in $\mathbb E^{k,l}$
\begin{equation}\label{confocal}
\mathcal Q_{c,\lambda}\colon \quad  \langle (U-\lambda \mathbf I)^{-1} x,
x\rangle =\sum_{i=1}^n\frac{\tau_i x_i^2}{U_i-\lambda}=c, \quad \lambda
\ne U_i, \quad i=1,\dots,n,
\end{equation}
where $U_i=J^2_i$, $i=1,\dots,n$.
In the light--like case, to a given trajectory $\{q_k\,\vert\,k\in\mathbb Z\}$ we associate a sequence of planes
\[
\pi_k=\Span\{q_k,Jq_k\}, \qquad k\in\mathbb Z.
\]
Then, if $\pi_k$ is tangent to
a cone $\mathcal Q_{0,\lambda^*}$ from the pseudo--confocal
family \eqref{confocal} for a certain $k$,  then it is tangent to $\mathcal
Q_{0,\lambda^*}$ for all $k\in\mathbb Z$. In the case $c=\pm 1$, instead of planes, to a trajectory $\{q_k\,\vert\,k\in\mathbb Z\}$ we associate sequence of lines
\[
l_k=Jq_k+\Span\{q_k\}, \qquad k\in\mathbb Z
\]
with the same property (Theorem \ref{jedanaesta}). Further, under the condition
$
U_1<U_2<\dots<U_n,
$
we estimate the number of (real) quadrics tangent to planes $\pi_k$ (lines $l_k$) for a generic trajectory $\{q_k\,\vert\,k\in\mathbb Z\}$ (Theorem \ref{trinaesta}).

\section{Heisenberg model and billiards}

\subsection{Generating functions for systems with constraints}
In what follows, we will use the following simple observation (see \cite{Jov3}).
Consider $(2n-2m)$--dimensional submanifolds $M\subset \R^{2n}(x,p)$ and $N \subset \R^{2n}(X,P)$, defined
by the constraints of the form
\begin{align*}\label{constraints2}
&M\colon \qquad f_i(x)=0, \qquad \, \, f_{m+i}(p,x)=0, \,\,\qquad i=1,\dots,m,\\
&N\colon \qquad F_i(X)=0, \qquad F_{m+i}(P,X)=0, \qquad i=1,\dots,m.
\end{align*}
We suppose that $M$ and $N$ are symplectic submanifolds with respect to the canonical symplectic forms, that is
\[
\det(\{f_i,f_j\})\ne 0\vert_M, \qquad
\det(\{F_i,F_j\}) \ne 0\vert_N, \qquad i,j=1,\dots,2m,
\]
where $\{\cdot,\cdot\}$ are the canonical Poisson bracket (e.g., see \cite{Su}).

\begin{thm}\label{prva}
If a graph $\Gamma_\phi$ of the diffeomorphism $\phi\colon M\to N$ can be given by
\begin{equation}\label{GF2}
p=\frac{\partial S(x,X)}{\partial x}+\sum_{i=1}^m \lambda_i \frac{\partial f_i}{\partial x}, \qquad
P=-\frac{\partial S(x,X)}{\partial X}-\sum_{i=1}^m \Lambda_i \frac{\partial F_i}{\partial X},
\end{equation}
for certain Lagrange multipliers $\lambda_i, \Lambda_i$,
then $\phi$ is symplectic. Similarly, if \eqref{GF2} defines a diffeomorphism $\phi\colon M\to N$,
then $\phi$ is symplectic.
\end{thm}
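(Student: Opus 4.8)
The plan is to show directly that $\phi$ pulls the induced symplectic form on $N$ back to the induced symplectic form on $M$. Writing $\omega_M$ and $\omega_N$ for the restrictions of the canonical forms $dp\wedge dx$ and $dP\wedge dX$ to $M$ and $N$, the hypotheses $\det(\{f_i,f_j\})\ne0\vert_M$ and $\det(\{F_i,F_j\})\ne0\vert_N$ guarantee that $\omega_M,\omega_N$ are nondegenerate, so that $M,N$ are genuine symplectic manifolds and the assertion that $\phi$ is symplectic means precisely $\phi^*\omega_N=\omega_M$. Equivalently, letting $\iota\colon\Gamma_\phi\hookrightarrow\R^{2n}\times\R^{2n}$ be the inclusion of the graph, I must prove that the $2$--form $dp\wedge dx-dP\wedge dX$ restricts to zero on $\Gamma_\phi$.

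First I would introduce the tautological $1$--form $\Theta=p\,dx-P\,dX$, for which $d\Theta=dp\wedge dx-dP\wedge dX$; thus it suffices to check that $\iota^*\Theta$ is closed, and in fact I expect it to be exact. The key is to evaluate $\iota^*\Theta$ using the generating--function relations \eqref{GF2}. Since the configuration constraints satisfy $f_i=f_i(x)$ and $F_i=F_i(X)$, one has $df_i=(\partial f_i/\partial x)\,dx$ and $dF_i=(\partial F_i/\partial X)\,dX$, so substituting \eqref{GF2} and collecting terms gives
\[
\iota^*\Theta=\iota^*\!\left(dS+\sum_{i=1}^m\lambda_i\,df_i+\sum_{i=1}^m\Lambda_i\,dF_i\right).
\]

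The decisive step is the restriction to the graph. Because $\Gamma_\phi\subset M\times N$, every point of $\Gamma_\phi$ satisfies $f_i(x)=0$ and $F_i(X)=0$; hence the functions $f_i,F_i$ vanish identically along $\Gamma_\phi$ and their differentials annihilate all vectors tangent to $\Gamma_\phi$, i.e. $\iota^*df_i=\iota^*dF_i=0$. Consequently the Lagrange--multiplier terms drop out---this is exactly where the multipliers are permitted to be nonzero yet remain harmless---and I obtain $\iota^*\Theta=\iota^*(dS)=d(\iota^*S)$. Taking the exterior derivative yields $\iota^*(dp\wedge dx-dP\wedge dX)=d\,d(\iota^*S)=0$, which is precisely $\phi^*\omega_N=\omega_M$. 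The second assertion follows from the identical computation: whenever \eqref{GF2} defines a diffeomorphism, its graph has the above form and the same argument shows it is symplectic.

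I expect the only real obstacle to be the bookkeeping of the restriction map: one must resist substituting the constraints too early, since the multipliers $\lambda_i,\Lambda_i$ are themselves functions on $\Gamma_\phi$ and need not vanish, and instead observe that it is the \emph{differentials} $df_i,dF_i$, not the multipliers, that die upon pulling back to the graph. A secondary point, relevant to the ``defines a diffeomorphism'' half of the statement, is that the nondegeneracy conditions $\det(\{f_i,f_j\})\ne0$ and $\det(\{F_i,F_j\})\ne0$ are what allow the momentum constraints $f_{m+i}(p,x)=0$ and $F_{m+i}(P,X)=0$ to be solved for the $2m$ multipliers, so that \eqref{GF2} genuinely parametrizes a graph of the correct dimension $2n-2m$.
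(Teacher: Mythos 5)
Your argument is correct, and it is the standard one for this kind of statement: the paper itself does not reprove Theorem \ref{prva} but quotes it from \cite{Jov3}, where the proof is exactly this computation with the tautological $1$--form $p\,dx-P\,dX$ on the graph. You identify the one genuinely delicate point correctly --- the multipliers $\lambda_i,\Lambda_i$ need not vanish, but since $f_i$ depends only on $x$ and $F_i$ only on $X$, the multiplier terms assemble into $\lambda_i\,df_i$ and $\Lambda_i\,dF_i$, whose pullbacks to $\Gamma_\phi$ die because $f_i$ and $F_i$ vanish identically there; the momentum constraints $f_{m+i},F_{m+i}$ play no role in this step, and the nondegeneracy hypotheses are needed only to make $M$ and $N$ symplectic so that the conclusion $\phi^*\omega_N=\omega_M$ is meaningful.
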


\subsection{Virtual billiards}
Let
\begin{equation}\label{elipsoid}
\mathbb{Q}^{n-1}=\left\{x\in \mathbb E^{k,l}\,|\, \langle A^{-1}x,x\rangle =c \right\}, \qquad c=\pm 1, 0
\end{equation}
be a $(n-1)$--dimensional quadric,
where
\[
A=\diag(a_1,\dots,a_n), \qquad \det A\ne 0.
\]
A point $x\in\mathbb{Q}^{n-1}$ is {\it singular} if the induced metric is degenerate at
$x$, i.e., if a pseudo--Euclidean normal $\A x$ at $x$
is light--like: $\langle A^{-2}x,x\rangle=0$.\footnote{The matrix $A$ used here, corresponds to the matrix $EA$ used in \cite{JJ3}.}

The \emph{virtual billiard mapping} $\phi\colon (x_k,y_k)\mapsto (x_{k+1},y_{k+1})$ is defined by:
\begin{align}
x_{k+1}&=x_k+\mu_k y_k=x_k-2\frac{\langle\A x_k,y_k\rangle}{\langle \A y_k,y_k\rangle }y_k,\label{1bilijar}\\
y_{k+1}&=y_k+\nu_k A^{-1}x_{k+1}=y_k+2\frac{\langle\A
x_{k+1},y_{k+1}\rangle}{\langle A^{-2} x_{k+1},x_{k+1}\rangle }A^{-1}x_{k+1}, \label{2bilijar}
\end{align}
where the multipliers
$\mu_k$, $\nu_k$
are determined from the conditions that the "impact" points $x_k$ belong to the quadric \eqref{elipsoid}
 and that the outgoing  and incoming directions at $x_{j+1}$ have the same norms: $ \langle y_{k+1},y_{k+1}\rangle=\langle y_k,y_k\rangle$.

Geometrically \eqref{2bilijar} means that $y_{k} \mapsto y_{k+1}$ is the billiard reflection at $x_{k+1}\in \mathbb Q^{n-1}$ in the pseudo--Euclidean space $\mathbb E^{k,l}$, but $\mu_k$ in \eqref{1bilijar} can be less then zero as well. Thus, the segments
$x_{k-1}x_k$ and $x_k x_{k+1}$ determined by 3 successive points
of the mapping \eqref{1bilijar}, \eqref{2bilijar} may be
either on the same side of the tangent plane $T_{x_k}\mathbb
Q^{n-1}$ (the usual billiard reflection at $x_k$), or
 on the opposite sides of
$T_{x_k}\mathbb Q^{n-1}$. Such configurations were studied in \cite{DR2006, DrRa, DR2012, Gl}.

The system is defined
outside the singular set
\begin{equation}\label{singular}
\Sigma=\{(x,y)\in \R^{2n}\,\,\vert\,\,\langle A^{-2}x,x\rangle =0 \,\, \vee
\langle A^{-1}x,y\rangle =0\,\,\vee\,\,\langle A^{-1} y,y\rangle =0\},
\end{equation}
and if $(x_{k+1},y_{k+1})$ is singular, the flow stops.
The lines
$l_k=\{x_k+sy_k\,\vert\,s\in\R\}$ containing segments $x_kx_{k+1}$
of a given virtual billiard trajectory are of the same type: they are all
either space--like ($\langle y_k,y_k\rangle>0$), time--like ($\langle y_k,y_k\rangle<0$) or light--like
($\langle y_k,y_k\rangle=0$). Also, the function $\langle \A x_k,y_k\rangle$ is the first integral of
the system.

Consider the submanifold of the symplectic linear space $\R^{2n}(x,y)$
\begin{align*}
M_{c,h} =\{(x,y)\in
\R^{2n}\backslash\Sigma\,\,\vert\,\,\phi_1=\langle A^{-1}x,x\rangle =c, \,\,
\phi_2=\langle y,y\rangle=h\},
\end{align*}
where we take the symplectic form
\[
\sum_i \tau_i dy_i\wedge dx_i
\]
obtained from the canonical symplectic form on $\R^{2n}(x,p)$ after the identification $p=Ey$.
Since $\{\phi_1,\phi_2\}=4\langle A^{-1}x,y\rangle \ne 0\vert_{M_{c,h}}$, it follows
that $M_{c,h}$ is a symplectic submanifold of $\R^{2n}(x,y)$ and the mapping $\phi$ is a symplectic transformation of $M_{c,h}$
(see Theorem 2.1, \cite{JJ3}\footnote{In Theorem 2.1, \cite{JJ3} a direct proof in terms of the induced Dirac--Poison brackets on $M_{c,h}$ is given for $c=1$, but the same proof applies for $c=0$ and $c=-1$.}).\footnote{Here, as in the third footnote we note that for the codomain of $\phi$ we should consider the variety $M_{c,h}$ without the assumptions
$\langle A^{-2}x,x\rangle \ne 0,\langle A^{-1} y,y\rangle \ne 0$.}

The Hamiltonian and contact integrability of the virtual billiard dynamics
is described in \cite{JJ3}.
In the case when $EA$ is positive definite, $c=+1$, this is a billiard system within ellipsoid ${\mathbb Q}^{n-1}$ in the pseudo-Euclidean space (see  \cite{KT, DR}).

For $c=0$, the dynamics \eqref{1bilijar}, \eqref{2bilijar} induces
a well defined dynamics of the lines $\Span\{x_k\}$, i.e, the points  $p_k=[x_k]\in \mathbb Q^{n-2}$ of the $(n-1)$--dimensional projective space $\mathbb
P(\mathbb E^{k,l})$
outside the singular set
$$
\Xi=\{[x]\in \mathbb P(\mathbb E^{k,l})\,\,\vert\,\,
\langle A^{-2}x,x\rangle =0\},
$$
where $\mathbb Q^{n-2}$ is the projectivization of the cone
\eqref{elipsoid} within $\mathbb P(\mathbb E^{k,l})$.
A sequence $\{p_k\}$ is a
{billiard trajectory} within the quadric $\mathbb Q^{n-2}$ in
the projective space $\mathbb P(\mathbb E^{k,l})$ with respect to the
metric induced from the pseudo--Euclidean space $\mathbb E^{k,l}$.
In particular, for the signature
$(n,0)$ and the condition
\[
0<a_1,a_2,\dots,a_{n-2},a_{n-1} <-a_{n},
\]
and the signature
$(n-1,1)$ with the
condition
\begin{equation}\label{condA2}
, \qquad
0<a_1,a_2,\dots,a_{n-2},a_{n-1} < a_{n},
\end{equation}
we obtain ellipsoidal billiards on the sphere
and the Lobachevsky space, respectively (see \cite{JJ3, Jov3}).

\subsection{The skew hodograph mapping and quadratic generating functions.}

There is a remarkable relation between the ellipsoidal Euclidean
billiards and the Heisenberg system established by the use of the so called
skew hodograph mapping (see Moser and Veselov \cite{MV}).
In \cite{Jov3}, the skew-hodograph
mapping is interpreted as a symplectic transformation with a quadratic generating function
for a system with constraints.
Here, we shall give analogous mapping for virtual billiards, which also
include the symmetries of the system. Another construction, related to pluri-Lagrangian systems, that associate generating functions to the billiard system within ellipsoid is recently given in \cite{S2}.

For the Euclidean case when $\mathbb Q^{n-1}$ is an ellipsoid, we have the following characterisation of quadratic
generating functions.

\begin{lem}
\label{karakterizacija}
A quadratic generating function
$
S(x,X)=\langle B x,X\rangle,
$
$\det B\ne 0$, defines
a symplectic transformation $\psi\colon M_{1,1}\to M_{1,1}$ within a real domain only if $\vert B^T A^{1/2}\vert=\vert B A^{1/2}\vert = 1$, where $\vert\cdot\vert$ is the operator norm of the matrix.
\end{lem}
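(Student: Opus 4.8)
The plan is to invoke Theorem \ref{prva} with $M=N=M_{1,1}$, read off the relations defining the graph of $\psi$, and observe that by that theorem symplecticity is automatic; the entire content of the lemma is then the \emph{realness} of the resulting transformation of $M_{1,1}$. In the Euclidean case $E=\mathbf I$, so $p=y$, and since $A$ is positive definite $A^{1/2}$ is real. With $S(x,X)=\langle Bx,X\rangle$ and the position constraints $f_1=\langle A^{-1}x,x\rangle-1$, $F_1=\langle A^{-1}X,X\rangle-1$, the relations \eqref{GF2} become
\[
y=B^{T}X+2\lambda A^{-1}x,\qquad Y=-Bx-2\Lambda A^{-1}X.
\]
Here the two multipliers are fixed by forcing the image back into $M_{1,1}$: $\lambda$ is determined by $\langle A^{-1}X,X\rangle=1$ and $\Lambda$ by $\langle Y,Y\rangle=1$, each being a single quadratic equation in one multiplier. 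So the question becomes: for which $B$ do both quadratics have real roots throughout $M_{1,1}$?

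I would treat the $\Lambda$-equation first. Solving the first relation for $X=(B^{T})^{-1}(y-2\lambda A^{-1}x)$ and substituting into the second, the condition $\langle Y,Y\rangle=1$ reads $|Bx+2\Lambda A^{-1}X|^{2}=1$, which has a real root in $\Lambda$ exactly when $|Bx|^{2}-\langle Bx,A^{-1}X\rangle^{2}/|A^{-1}X|^{2}\le 1$. Because $x\in\mathbb Q^{n-1}$ and the unit vector $y$ are \emph{independent} coordinates on $M_{1,1}$, the pair $(x,X)$ sweeps an open subset of $\mathbb Q^{n-1}\times\mathbb Q^{n-1}$; choosing $A^{-1}X$ orthogonal to $Bx$ at an extremal $x=A^{1/2}v$ (with $|v|=1$) shows that realness on all of $M_{1,1}$ forces $\max_{x\in\mathbb Q^{n-1}}|Bx|=|BA^{1/2}|\le 1$, i.e. every singular value of $BA^{1/2}$ is $\le 1$. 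The $\lambda$-equation is handled symmetrically: $\langle A^{-1}X,X\rangle=1$ becomes $|C(y-2\lambda A^{-1}x)|^{2}=1$ with $C=(B^{T}A^{1/2})^{-1}$, and choosing $A^{-1}x$ orthogonal to $C^{T}Cy$ at an extremal $y$ forces $|C|=|(B^{T}A^{1/2})^{-1}|\le 1$, i.e. every singular value of $B^{T}A^{1/2}$ is $\ge 1$.

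The final step upgrades these inequalities to the claimed equalities by a determinant identity. Since $|\det(BA^{1/2})|=|\det(B^{T}A^{1/2})|=|\det B|\sqrt{\det A}$, the two products of singular values coincide:
\[
\prod_i\sigma_i(BA^{1/2})=\prod_i\sigma_i(B^{T}A^{1/2}).
\]
The left product is $\le 1$ (all factors $\le 1$) while the right is $\ge 1$ (all factors $\ge 1$); since they are equal, each product equals $1$, whence every singular value of both $BA^{1/2}$ and $B^{T}A^{1/2}$ equals $1$. In particular $|BA^{1/2}|=|B^{T}A^{1/2}|=1$, as required (indeed both matrices turn out to be orthogonal).

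I expect the main obstacle to be the second paragraph rather than the clean determinant argument: one must verify carefully that the extremal/tangency configurations producing the sharp bounds are genuinely attained inside the \emph{real} domain $M_{1,1}$ — that $x$ and the unit vector $y$ can be varied independently, that $A^{-1}x$ (for $x\in\mathbb Q^{n-1}$) and $y$ each range over all directions, and that these configurations avoid the singular set $\Sigma$. It is exactly this independence that converts the two discriminant inequalities into the sharp singular-value bounds; once those are in hand, the equality $|BA^{1/2}|=|B^{T}A^{1/2}|=1$ follows immediately from the determinant identity.
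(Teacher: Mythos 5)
Your proposal is correct, and its computational core---the two quadratic equations for the Lagrange multipliers, their discriminants, and the choice of extremal/orthogonal configurations---is the same as in the paper's proof. Where you genuinely diverge is in how the one-sided bounds are assembled into the stated equality. The paper extracts four bounds from realness requirements: $\vert B^TA^{1/2}\vert\le 1$ and $\vert BA^{1/2}\vert\le 1$ from the discriminants of the quadratics for $\lambda$ and $\Lambda$ when the graph is parametrized by $(x,X)\in\mathbb Q^{n-1}\times\mathbb Q^{n-1}$, and then $\vert(B^TA^{1/2})^{-1}\vert\le 1$ and $\vert(BA^{1/2})^{-1}\vert\le 1$ from requiring that $X$ be a real function of $(x,y)$ and, symmetrically, that $(x,y)$ be real functions of $(X,Y)$. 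You instead derive only two of these---$\sigma_{\max}(BA^{1/2})\le 1$ from realness of $Y$ and $\sigma_{\min}(B^TA^{1/2})\ge 1$ from realness of $X$---and close with the identity $\prod_i\sigma_i(BA^{1/2})=\vert\det B\vert\sqrt{\det A}=\prod_i\sigma_i(B^TA^{1/2})$, which forces every singular value of both matrices to equal $1$. This endgame is tidier: it needs only the forward map to be real-valued on $M_{1,1}$ (no appeal to invertibility of $\psi$), and it yields the sharper conclusion that $BA^{1/2}$ and $B^TA^{1/2}$ are orthogonal, which is exactly what the solutions $B=\mathbf R A^{-1/2}$ exhibited after the lemma satisfy. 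The attainability caveat you flag for the $\Lambda$-equation---that the configuration $A^{-1}X\perp Bx$ with $x$ extremal is actually realized by a real point of $M_{1,1}$, since in your parametrization $X$ is a dependent quantity---is a genuine loose end, but the paper's own proof asserts the existence of the bad pair $(x,X)$ with the same degree of informality, so your argument is not below the paper's standard of rigor on this point.
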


The proof of Lemma \ref{karakterizacija} is given in the Appendix.
Apart from the obvious solution $B={A}^{-1/2}$ of the stated necessary conditions
that leads to the skew-hodograph mapping (see \cite{Jov3}), we have a family of solutions related to the symmetry of the
ellipsoid $\mathbb{Q}^{n-1}$. Namely, let $\mathbf R\in O(n)$ be an orthogonal matrix that commute with $A \colon \Ad_\mathbf R(A)=A$.
Then we can take
$
B=\mathbf RA^{-1/2}=A^{-1/2}\mathbf R.
$

The above construction can be considered in pseudo--Euclidean spaces as well, provided $A$ is positive definite.
Recall that if some of the eigenvalues of the matrix $A$ are the same, we deal
with virtual billiards with symmetries and the corresponding dynamics is integrable in a noncommutative sense (see \cite{JJ3}).
The set of all symmetries
\[
\mathbf R\in O(k,l)\colon  \quad \Ad_\mathbf R(A)=A
\]
($\mathbf RE\mathbf R^T=E$, $\mathbf R A \mathbf R^{-1}=A$)
is isomorphic to
$
O(k_1,l_1)\times\dots\times O(k_r,l_r).
$
If all eigenvalues of $A$ are distinct, the only symmetries are
\[
\mathbf R\in \mathbb Z_2^n\subset O(k,l), \quad \text{i.e}, \quad
\mathbf R=\diag(\pm 1,\dots, \pm 1),
\]
and the system is integrable in the usual commutative sense.

Let
\[
B=\mathbf RA^{-1/2},
\]
where $\mathbf R\in O(k,l)$ is a symmetry of the quadric.
Consider the symplectic manifold $M_{c,c}$, $c=\pm 1, 0$ and the generating function
\[
S(x,X)=\langle Bx,X\rangle=\langle \mathbf RA^{-1/2} x,X\rangle.
\]

The equations \eqref{GF2} become
\begin{align}
Ey&=B^TE X+\lambda E\A x=\mathbf R^TA^{-1/2} EX+\lambda E\A x,\label{1nova++}\\
EY&=-EBx-\Lambda E\A X=-E\mathbf RA^{-1/2} x-\Lambda E\A X, \label{2nova++}
\end{align}
where $\langle \A x,x\rangle =c$, $\langle \A X,X\rangle=c$. We have four real values for $(\lambda,\Lambda)$ given by
\begin{align*}
&\lambda=0 \quad {\text{or}} \quad \lambda=-2\langle \A x,\mathbf R^{-1}A^{-1/2} X\rangle /\langle  A^{-2}x,x\rangle,\\
&\Lambda=0 \quad {\text{or}} \quad \Lambda=-2 \langle \A X,\mathbf  RA^{-1/2} x\rangle  /\langle A^{-2}X,X\rangle.
\end{align*}

For
$\lambda=0$, $\Lambda\ne 0$,
the relations \eqref{1nova++}, \eqref{2nova++} define the symplectic mapping $\psi_\mathbf R\colon M_{c,c}\to M_{c,c}$
given by
\begin{align}
X&=\mathbf RA^{1/2}y,\label{1nova**}\\
Y&=-\mathbf RA^{-1/2}(x+\mu y), \qquad \mu=-2\langle \A x,y\rangle/\langle \A y,y\rangle.\label{2nova**}
\end{align}

Let $\mathbf I$ be the identity $n\times n$--matrix.

\begin{thm}\label{druga}
{\rm (i)} The mapping $\psi_\mathbf R$ commute with
the virtual billiard mapping $\phi$. In other words, let $(x_k,y_k)$ be a solution of
\eqref{1bilijar}, \eqref{2bilijar} with $\langle
y_k,y_k\rangle=c$.
Then $(x_k',y_k')=\psi_\mathbf R(x_k,y_k)$, is a solution of \eqref{1bilijar},
\eqref{2bilijar} with $\langle y'_k,y'_k\rangle=c$.
Moreover,  $\psi^2_\mathbf R=-\mathbf R^2\circ \phi:$
$$
x_k''=-\mathbf R^2 x_{k+1}, \qquad y_{k}''=-\mathbf R^2 y_{k+1}.
$$

{\rm (ii)}  Let $(x_k,y_k)$ be a trajectory of the mapping $\psi=\psi_\mathbf{I}$.
Then $q_k=y_k$ is a
solution of the Heisenberg model \eqref{neumann} on $S^{n-1}_{c}$ with $J=A^{1/2}$. Conversely, if $J$ is
positive definite and $q_k$ is a solution of the Heisenberg system
\eqref{neumann} on $S^{n-1}_{c}$, then
\[
x_k=(-1)^k J q_{2k}, \qquad \tilde{x}_{k}=(-1)^k J q_{2k+1}
\]
are billiard trajectories within the quadric $\langle A^{-1}x_k, x_k\rangle=c$,
where $A=J^2$.
\end{thm}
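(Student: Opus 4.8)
The plan is to deduce everything in (i) from one direct computation of $\psi_\mathbf R^2$, and to obtain (ii) as two specializations of the case $\mathbf R=\mathbf I$.

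For (i) I would first record the two defining properties of a symmetry: $\mathbf R$ commutes with $A$, hence with $A^{\pm 1/2}$, and $\mathbf R$ preserves the form, $\langle \mathbf R u,\mathbf R v\rangle=\langle u,v\rangle$ (these are $\mathbf R A\mathbf R^{-1}=A$ and $\mathbf R E\mathbf R^T=E$). Writing $\psi_\mathbf R(x,y)=(X,Y)$ as in \eqref{1nova**}, \eqref{2nova**} and applying $\psi_\mathbf R$ once more, the first component collapses via $\mathbf R A^{1/2}\cdot\mathbf R A^{-1/2}=\mathbf R^2$ to $X'=-\mathbf R^2(x+\mu y)$; since the multiplier $\mu$ in \eqref{2nova**} is exactly the billiard multiplier of \eqref{1bilijar}, the vector $x+\mu y$ is the billiard image $x_{k+1}$, so $X'=-\mathbf R^2 x_{k+1}$. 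For the second component I would use the two metric identities $\langle \A X,Y\rangle=\langle \A x,y\rangle$ (which also exhibits the first integral $\langle \A x,y\rangle$ as preserved) and $\langle \A Y,Y\rangle=\langle A^{-2}(x+\mu y),x+\mu y\rangle=\langle A^{-2}x_{k+1},x_{k+1}\rangle$; these identify the second-step multiplier with $-\nu_k$ from \eqref{2bilijar}, whence $Y'=-\mathbf R^2(y+\nu_k\A x_{k+1})=-\mathbf R^2 y_{k+1}$. This is the asserted identity $\psi_\mathbf R^2=-\mathbf R^2\circ\phi$.

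Next I would upgrade this to commutation. A one-line check, using again that $\mathbf R^2$ is a symmetry, shows that the diagonal map $T=-\mathbf R^2$ commutes with both $\phi$ and $\psi_\mathbf R$. From $\psi_\mathbf R^2=T\circ\phi$ and $T\psi_\mathbf R=\psi_\mathbf R T$ one expands $\psi_\mathbf R^3$ in two ways and cancels the invertible $T$ to obtain $\psi_\mathbf R\circ\phi=\phi\circ\psi_\mathbf R$. Commutation immediately yields that $(x_k',y_k')=\psi_\mathbf R(x_k,y_k)$ is again a solution of \eqref{1bilijar}, \eqref{2bilijar}, and the constraint follows directly from $y_k'=-\mathbf R A^{-1/2}x_{k+1}$: indeed $\langle y_k',y_k'\rangle=\langle A^{-1}x_{k+1},x_{k+1}\rangle=c$.

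For the forward direction of (ii) (so $\mathbf R=\mathbf I$, $J=A^{1/2}$), a $\psi$-trajectory satisfies $x_{k+1}=Jy_k$ and $y_{k+1}=-J^{-1}(x_k+\mu_k y_k)$. Putting $q_k=y_k$ and eliminating $x_k=Jq_{k-1}$ gives $q_{k+1}+q_{k-1}=-\mu_k J^{-1}q_k$, which becomes \eqref{neumann} upon multiplying by $EJ$, with $\lambda_k=-\mu_k$; substituting $x_k=Jq_{k-1}$, $y_k=q_k$ into $\mu_k$ reproduces the Heisenberg multiplier \eqref{lambda}, while $\langle q_k,q_k\rangle=\langle y_k,y_k\rangle=c$ places $q_k$ on $S^{n-1}_c$. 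For the converse I would read this correspondence backwards: given a Heisenberg solution $q_k$ with $J$ positive definite, the sequence $(x_k,y_k)=(Jq_{k-1},q_k)$ is a $\psi$-trajectory by the same identities, and the special case $\psi^2=-\phi$ of (i) shows that $\phi$ sends $(Jq_{k-1},q_k)$ to $(-Jq_{k+1},-q_{k+2})$. Iterating $\phi$ from $(Jq_0,q_1)$ then produces the impact points $Jq_0,-Jq_2,Jq_4,\dots=(-1)^m Jq_{2m}$, and from $(Jq_1,q_2)$ the impact points $(-1)^m Jq_{2m+1}$; these are the two asserted billiard trajectories, lying on the quadric since $\langle A^{-1}x_k,x_k\rangle=\langle q_{2k},q_{2k}\rangle=c$. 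Positivity of $J$ enters only here, to guarantee $A^{1/2}=J$ so that the maps built from $A=J^2$ really are $\psi$ and $\phi$. The main obstacle I anticipate is the bookkeeping in the $\psi_\mathbf R^2$ computation---matching the second-step multiplier with the reflection multiplier $\nu_k$ and tracking every sign---together with the even/odd index accounting in the converse, where the factor $-\mathbf R^2=-\mathbf I$ in $\psi^2=-\phi$ is exactly what forces the splitting into the sequences $x_k$ and $\tilde x_k$.
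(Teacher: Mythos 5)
Your proposal is correct, and its computational core coincides with the paper's: the identities $\langle\A X,Y\rangle=\langle\A x,y\rangle$ and $\langle\A Y,Y\rangle=\langle A^{-2}x_{k+1},x_{k+1}\rangle$, which identify the second-step multiplier with $-\nu_k$, are exactly the multiplier bookkeeping the paper performs. The one genuine difference is the logical order in part (i). The paper first shows that the primed sequence is itself a billiard orbit by exhibiting its multipliers directly: from $x_k'=\mathbf R A^{1/2}y_k$ and $y_k'=-\mathbf R A^{-1/2}x_{k+1}$ it gets $x_{k+1}'=x_k'+\mu_k'y_k'$ and $y_{k+1}'=y_k'+\nu_k'\A x_{k+1}'$ with $\mu_k'=-\nu_k$, $\nu_k'=-\mu_{k+1}$, after which $\psi_\mathbf R^2=-\mathbf R^2\circ\phi$ is an immediate consequence. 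You invert this order, computing $\psi_\mathbf R^2$ first and then recovering commutation abstractly from the equivariance of $\phi$ and $\psi_\mathbf R$ under $T=-\mathbf R^2$ and the cancellation in $\psi_\mathbf R^3$. Both routes are sound; the paper's is marginally more economical because proving that $(x_k',y_k')$ is a solution and proving $\psi_\mathbf R^2=-\mathbf R^2\circ\phi$ amount to the same two lines, whereas yours needs the extra (easy) check that $T$ commutes with $\psi_\mathbf R$ and $\phi$, but in exchange cleanly separates the pointwise identity from the dynamical statement. In part (ii) your forward direction is the paper's verbatim ($q_{k+1}+q_{k-1}=-\mu_k J^{-1}q_k$ with $\lambda_k=-\mu_k$ matching \eqref{lambda}), and your converse is in fact more explicit than the paper's, which leaves the sign and parity bookkeeping behind $x_m=(-1)^m Jq_{2m}$, $\tilde x_m=(-1)^m Jq_{2m+1}$ to the reader; deriving it from $\psi^2=-\phi$ together with the $(-\mathbf I)$--equivariance of $\phi$ is the intended argument, and your remark on where positivity of $J$ enters is accurate.
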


\begin{proof} (i) Let $(x_k,y_k)$ be a
solution of \eqref{1bilijar}, \eqref{2bilijar} with $\langle
y_k,y_k\rangle=c$ and let
$$
(x_k',y_k')=\psi_\mathbf R (x_k,y_k)=(\mathbf RA^{1/2}y_k,-\mathbf RA^{-1/2}x_{k+1}).
$$
Then with $k$ replaced by
$k+1$  we obtain, respectively,
\begin{align*}
& x_{k+1}'=x_k'+\mu_k' y_k',            \\
& y_{k+1}'=y_k'+\nu_k' A^{-1} x_{k+1}',
\end{align*}
where $\mu_k'=-\nu_k$, $\nu_k'=-\mu_{k+1}$.

Further, we have
\begin{align*}
x_k'' &=\mathbf R A^{1/2} y_k'=- \mathbf R A^{1/2} \mathbf R A^{-1/2} x_{k+1}=-\mathbf R^2 x_{k+1},\\
y_k'' &=-\mathbf R A^{-1/2}x_{k+1}'=-\mathbf R A^{-1/2}\mathbf R A^{1/2}y_{k+1}=-\mathbf R^2 y_{k+1}.
\end{align*}

(ii) The second statement follows from the relations
\begin{align*}
y_{k+2} &=-A^{-1/2}(x_{k+1}+\mu_{k+1}y_{k+1})\\
        &=-A^{-1/2}(A^{1/2}y_k+\mu_{k+1}y_{k+1})\\
        &=-y_k-\mu_{k+1}A^{-1/2}y_{k+1}. \qedhere
\end{align*}
\end{proof}

For $\psi=\psi_\mathbf I$, we have the following commutative diagram
\begin{equation}\label{comDiagram}
\xymatrix@R45pt@C60pt{
P_c \ar@{^{}->}[r]^{\Delta} \ar@{^{}->}[d]^{\Phi} & M_{c,c} \ar[r]^{\phi} \ar@{^{}->}[d]^{\psi}& M_{c,c} \ar@{^{}->}[d]^{\psi} \\
P_c \ar@{^{}->}[r]^{\Delta}  & M_{c,c}  \ar[r]^{\phi}& M_{c,c} }
\end{equation}
where $\Delta \colon P_c\to M_{c,c}$ is a symplectomorphism
 $x=Jq, y=Q$, $J=A^{1/2}$.

Also, since $\psi^2=-\phi$, if $q_k$ is periodic with period
$4N$ (respectively, $4N+1, 4N+2, 4N+3$), then $x_k$, $\tilde x_k$ are periodic
with period $2N$ (respectively, $8N+2, 4N+2, 8N+6$).

\begin{figure}[ht]
\includegraphics[width=80mm]{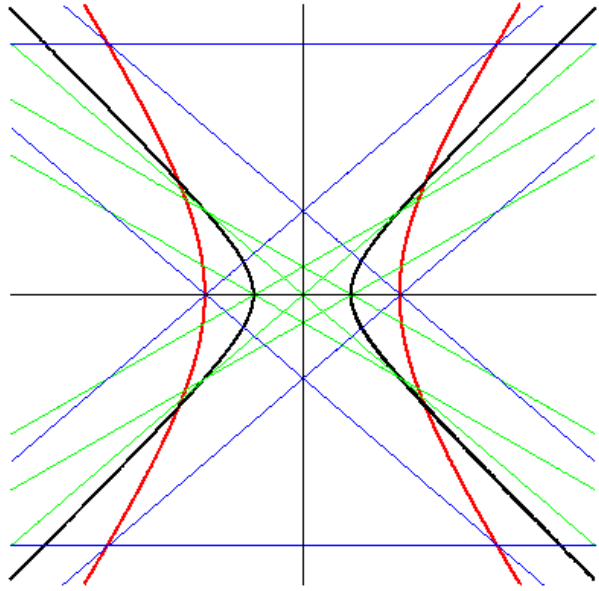}
\caption{6--periodic trajectory of the Heisenberg model (green
lines) and the corresponding 6--periodic space--like trajectory of
the virtual billiard for $c=1$ (blue lines) in $E^{1,1}$.}
\end{figure}



In the signature $(n-1,1)$ the statement relates the
ellipsoidal billiard on the Lobachevsky space and the Heisenberg system on the light--like cone $S^{n-1}_0$ with the matrix $A$ given by \eqref{condA2} (see \cite{Jov3}).

\begin{exm}\label{ST0}{\rm
As an example of a system with symmetry, consider
the billiard within ellipsoid
\[
\mathbb Q^{2n-1}\colon \quad \langle A^{-1} \mathbf z,\bar{\mathbf z}\rangle=\frac{\vert z_1\vert^2}{a_1}+\dots+\frac{\vert z_n\vert^2}{a_n}=1
\]
in the Euclidean space $\R^{2n}\cong \mathbb C^{n}$, $\mathbf z=(z_1,\dots,z_n)$
(in \cite{BozaRos} we studied the reduction of symmetries of the given billiard with additional Hook's potential).  In the complex notation we have
\[
M_{1,1}=\{(\mathbf z,\mathbf v)\in\mathbb C^{2n} \,\vert\, \langle A^{-1} \mathbf z,\bar{\mathbf z}\rangle=1, \langle \mathbf v,\bar{\mathbf v}\rangle=1\}.
\]

Note that for $a_i\ne a_j$, $i\ne j$, we have $O(2)^n$ (i.e, $U(n)^n$) symmetry of $\mathbb Q^{2n-1}$
and the symplectic mapping \eqref{1nova**}, \eqref{2nova**} reads
\begin{align}
\mathbf z_{k+1}&=\mathbf RA^{1/2}\mathbf v_k,\label{ST1}\\
\mathbf v_{k+1}&=-\mathbf RA^{-1/2}(\mathbf z_k+\mu_k \mathbf v_k), \quad \mu_k=-2{\Re}\langle \A z_k,\bar v_k\rangle/\langle \A v_k,\bar v_k\rangle,
\label{ST2}
\end{align}
where
\[
\mathbf R=(e^{i\theta_1},\dots,e^{i\theta_n}).
\]

In particular, for $\mathbf R=(i,\dots,i)=iE$ we obtain $\psi^2_{iE}=\phi$, that is \eqref{ST1}, \eqref{ST2} is exactly the square root of the billiard. This mapping coincides with the symplectic billiard mapping for the ellipsoid $\mathbb Q^{2n-1}$ introduced in \cite{AT}. More precisely, after setting $k+1$ instead $k$ in \eqref{ST1},  we get
\[
\mathbf z_{k+2}= i A^{1/2}\mathbf v_{k+1}= i A^{1/2}(-i A^{1/2})(\mathbf z_k+\mu_k \mathbf v_k)= \mathbf z_k-\mu_k i A^{-1/2}\mathbf z_{k+1}.
\]
Thus,  $\{\mathbf z_k\}$ is a trajectory of
a symplectic billiard within the ellipsoid $\mathbb Q^{2n-1}$ (corresponding to the ellipsoid (18) in \cite{AT}, where
we set $a_i$ instead of $a_i^2$).
}\end{exm}

\subsection{The $(n\times n)$--matrix representation of the virtual billiards.}

Motivated by the Lax representation for elliptical billiards with the Hooke's potential (Fedorov \cite{fedo}, see also \cite{R}),
a "small" $2\times 2$ matrix representation for the virtual billiard mapping is given in \cite{JJ}.
On the other hand, in \cite{Jov2} we presented the following "big" $n\times n$--matrix representation of the Heisenberg system, a modification of the $n\times n$ matrix
representation given in \cite{MV}.
Let
\[
F=\diag(1,\dots,1,i,\dots, i),
\]
where the first $k$ components are equal to 1, and the last $n-k$
components are equal to the imaginary unit $i$ ($F^2=E$). The
equations \eqref{neumann} imply the discrete Lax representation
\begin{equation}\label{velikiLA}
L_{k+1}(\lambda)=A_k(\lambda)L_k(\lambda)A_k^{-1}(\lambda),
\end{equation}
where
\begin{eqnarray*}
&& L_{k}(\lambda)=J^2+\lambda Fq_{k-1}\wedge FJ q_k-c\cdot\lambda^2 Fq_{k-1}\otimes Fq_{k-1},\\
&& A_k(\lambda)=J-\lambda Fq_k\otimes Fq_{k-1}.
\end{eqnarray*}

Note that in the light--like cone case, the $L$--matrix is linear
in $\lambda$. Also, if $J^2_j\ne J^2_i$, the integrals obtained from the
matrix representation can be written in the form \eqref{int-neumann}.

In the Euclidean case, the skew hodograph mapping relates $n\times n$ matrix representations of
the Heisenberg model and the elliptic
billiard \cite{MV}. Although we have
an analog of the skew hodograph mapping only for $A>0$, the
modification of the matrix representation for the Heisenberg model
from the Euclidean to the pseudo--Euclidean spaces \eqref{velikiLA} suggests the following matrix representation for
virtual billiards.

\begin{thm}\label{treca}
The virtual billiard mapping \eqref{1bilijar}, \eqref{2bilijar} implies the discrete Lax
representation
$$
\mathcal L_{k+1}(\lambda)=\mathcal A_k(\lambda)\mathcal
L_k(\lambda)\mathcal A_k^{-1}(\lambda),
$$
where
\begin{align*}
& \mathcal L_{k}(\lambda)=A-\lambda Fx_{k}\wedge F
y_{k-1}- c \cdot \lambda^2
Fy_{k-1}\otimes Fy_{k-1},\\
&\mathcal  A_k(\lambda)=A-\lambda (Fx_k\otimes
Fy_{k-1}-Fy_k\otimes Fx_k)- c\cdot  \lambda^2  Fy_{k}\otimes Fy_{k-1}.
\end{align*}
\end{thm}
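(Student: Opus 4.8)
The plan is to verify the Lax equation in its equivalent, pole-free multiplicative form
\[
\mathcal{L}_{k+1}(\lambda)\,\mathcal{A}_k(\lambda)=\mathcal{A}_k(\lambda)\,\mathcal{L}_k(\lambda),
\]
treating both sides as matrix polynomials in $\lambda$ and matching the coefficients of $\lambda^0,\dots,\lambda^4$. Since $\det\mathcal{A}_k(\lambda)$ is a polynomial in $\lambda$ that is not identically zero, $\mathcal{A}_k(\lambda)^{-1}$ exists for all but finitely many $\lambda$, and the identity above is then equivalent to the stated conjugation relation as a rational identity in $\lambda$.

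First I would record the algebraic tools and normalise the notation. Writing $\xi_k=Fx_k$, $\eta_k=Fy_k$ and using $F^2=E$, $FA=AF$, one has $(Fu)\cdot(Fv)=\langle u,v\rangle$ for the ordinary bilinear dot product $a\cdot b=\sum_j a_jb_j$, so that every pseudo--Euclidean scalar product becomes a genuine dot product of the $\xi,\eta$ vectors; this is precisely the role of $F$. In this notation $\mathcal{L}_k=A-\lambda P_k-c\lambda^2 Q_k$ and $\mathcal{A}_k=A-\lambda R_k-c\lambda^2 S_k$ with $P_k=\xi_k\wedge\eta_{k-1}$, $Q_k=\eta_{k-1}\otimes\eta_{k-1}$, $R_k=\xi_k\otimes\eta_{k-1}-\eta_k\otimes\xi_k$, $S_k=\eta_k\otimes\eta_{k-1}$, the billiard map \eqref{1bilijar}, \eqref{2bilijar} becomes $\xi_{k+1}=\xi_k+\mu_k\eta_k$ and $\eta_k=\eta_{k-1}+\nu_{k-1}\A\xi_k$, and the relevant scalar data are the constraint $\xi_k\cdot\A\xi_k=c$, the conserved norm $\eta_k\cdot\eta_k=\eta_{k-1}\cdot\eta_{k-1}$, and the scalar $\xi_k\cdot\eta_k$. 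Throughout I would use the rank--one identities $(a\otimes b)(c\otimes d)=(b\cdot c)\,a\otimes d$, $M(a\otimes b)=(Ma)\otimes b$ and $(a\otimes b)M=a\otimes(M^Tb)$.

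The coefficient of $\lambda^0$ is $A^2=A^2$. The coefficient of $\lambda^4$ reduces, after a single application of the product rule, to $(\eta_k\cdot\eta_k)\,\eta_k\otimes\eta_{k-1}=(\eta_{k-1}\cdot\eta_{k-1})\,\eta_k\otimes\eta_{k-1}$, which is exactly norm conservation. The coefficients of $\lambda^{1}$ and $\lambda^{3}$ likewise collapse: for $\lambda^{1}$ after substituting both mapping relations (so that $A\eta_k-A\eta_{k-1}=\nu_{k-1}\xi_k$), and for $\lambda^{3}$ after substituting $\xi_{k+1}=\xi_k+\mu_k\eta_k$ together with norm conservation; in each case the terms containing $\mu_k$ or the scalar $\xi_k\cdot\eta_k$ cancel in pairs, leaving identical rank--one expressions on both sides.

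The crux is the coefficient of $\lambda^{2}$, which after rearrangement reads
\[
P_{k+1}R_k-R_kP_k=c\,(AS_k-S_kA)+c\,(Q_{k+1}A-AQ_k),
\]
with $Q_{k+1}=\eta_k\otimes\eta_k$. Expanding the left side by the product rule produces the scalar products $\eta_k\cdot\xi_k$, $\xi_{k+1}\cdot\xi_k$, $\xi_{k+1}\cdot\eta_k$ and $\eta_{k-1}\cdot\xi_k$, and this is the step where the quadric constraint is indispensable: using $\eta_{k-1}=\eta_k-\nu_{k-1}\A\xi_k$ together with $\xi_k\cdot\A\xi_k=c$ converts $\eta_{k-1}\cdot\xi_k$ and $\xi_k\cdot\eta_{k-1}$ into $\xi_k\cdot\eta_k-\nu_{k-1}c$, whereupon all $\mu_k$-- and $(\xi_k\cdot\eta_k)$--terms cancel and the left side collapses to $c\,\nu_{k-1}(\xi_k\otimes\eta_{k-1}+\eta_k\otimes\xi_k)$. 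Substituting $A\eta_k=A\eta_{k-1}+\nu_{k-1}\xi_k$ on the right side makes the $A\eta_{k-1}$--terms cancel and yields exactly the same expression. I expect this $\lambda^{2}$ matching to be the main obstacle, since it is the only place where the mapping relations, norm conservation and the defining constraint $c=\langle\A x,x\rangle$ must all be combined at once and the rank--one bookkeeping is heaviest. It is also worth noting, as this computation reveals, that the explicit formulas for $\mu_k,\nu_k$ are never needed beyond the facts that the mapping relations hold and preserve the constraint and the norm; the whole identity is the pseudo--Euclidean, ``squared'' counterpart of the Heisenberg Lax pair \eqref{velikiLA}, and the direct verification has the advantage of being valid for arbitrary signature, not only for $A>0$ where the skew--hodograph correspondence of Theorem \ref{druga} would apply.
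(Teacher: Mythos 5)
Your proposal is correct and follows essentially the same route as the paper's proof in the Appendix: both verify the pole-free identity $\mathcal L_{k+1}(\lambda)\mathcal A_k(\lambda)=\mathcal A_k(\lambda)\mathcal L_k(\lambda)$ by matching coefficients of $\lambda^0,\dots,\lambda^4$, using the mapping relations \eqref{1bilijar}, \eqref{2bilijar}, norm conservation, and the constraint $\langle A^{-1}x_k,x_k\rangle=c$ exactly where you indicate (the latter only in the $\lambda^2$ term). The only cosmetic differences are your normalisation $\xi_k=Fx_k$, $\eta_k=Fy_k$, which turns the pseudo--Euclidean pairings into ordinary dot products, and your explicit remark that invertibility of $\mathcal A_k(\lambda)$ for all but finitely many $\lambda$ suffices to pass to the conjugation form.
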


The proof is given in the Appendix.

\section{B\"acklund transformation of the Neumann system}

\subsection{Continuous limit and the Neumann system.}

Following
Moser and Veselov \cite{MV}, by taking $J(\epsilon)=\mathbf I+\frac12
\epsilon^2 U$, $q_k=q(t_0+k\epsilon)$ for small $\epsilon$, from
\eqref{neumann} we obtain the equation for $q(t)$
\begin{equation}\label{aproksimacija}
(\mathbf I+\frac12 \epsilon^2 U)(2q+\epsilon^2\ddot{q})\approx
\lambda q,
\end{equation}
that is
\[
\ddot{q}\approx -Uq+ \mu q, \qquad \mu=(\lambda-2)\epsilon^{-2}.
\]

The last equation in the case $c=\pm 1$ describes {\it the
Neumann system on a pseudosphere}. This is the
Lagrangian system with the Lagrangian
$$
L(q,\dot q)=\frac12\langle \dot q,\dot q\rangle -\frac12\langle
Uq,q\rangle,
$$
subjected to the constraint $\langle q,q\rangle=c$, $c=\pm 1$,
where
\[
U=\diag(U_1,\dots,U_n).
\]
Indeed, the associated Euler--Lagrange
equation on the tangent bundle $TS^{n-1}_c$ realized by equations
\begin{equation}\label{tS}
\langle q,q\rangle=c, \qquad \langle q,\dot q\rangle=0,
\end{equation}
reads
\begin{equation}\label{c-neumann}
\ddot{q}=-Uq+\mu q, \qquad \mu=-\frac{1}c\big(\langle \dot q,\dot q \rangle-\langle
Uq,q\rangle\big).
\end{equation}
In the case of the light--like cone $c=0$, the Lagrangian $L$ is
degenerate, since all points of $S^{n-1}_0$ are singular.

We will show that the cotangent bundle formulation of the
Heisenberg model provides a  B\"acklund transformation of the
Neumann system. The construction is a straightforward generalization
of the discretization of the Neumann system presented by Suris \cite{Su}.

Firstly, we need a Hamiltonian formulation of the
Neumann flow.
Consider the realization of the cotangent
bundle $T^*S^{n-1}_c$ as a submanifold of $\R^{2n}(q,p)$ endowed with the canonical symplectic form $\omega=dp\wedge dq$:
\begin{equation}\label{pseudosphere}
T^*S^{n-1}_c\colon \quad \varphi_1=\langle q,q\rangle=c, \quad
\varphi_2=\langle q,Ep\rangle=0.
\end{equation}

This is a symplectic submanifold for $c=\pm 1$, since
$\{\varphi_1,\varphi_2\}=2\langle q,q\rangle=c\ne 0$ at
$T^*S^{n-1}_c$. Moreover, the restriction
$\omega\vert_{T^*S^{n-1}_c}$ coincides with the canonical
symplectic form on $T^*S^{n-1}_c$. The induced Poisson--Dirac
bracket reads
\begin{equation}
\{f_1,f_2\}_D=\{f_1,f_2\} -\frac{
\{\varphi_1,f_1\}\{\varphi_2,f_2\}-\{\varphi_2,f_1\}\{\varphi_1,f_2\}}{\{\varphi_1,\varphi_2\}}.
\label{Dirac_bracket2}
\end{equation}

The equation
\begin{equation}\label{LT}
p=\frac{\partial L}{\partial \dot q}+\lambda Eq=E\dot q+\lambda Eq
\end{equation}
with constraints \eqref{tS} and \eqref{pseudosphere} implies $\lambda=0$ and $\dot q=Ep$.
Thus, the Legendre transformation of $L(q,\dot q)$
yields the Hamiltonian function
\begin{equation}\label{Ham}
H(q,p)=\frac12\langle p,p\rangle +\frac12\langle Uq,q\rangle.
\end{equation}

The equations \eqref{c-neumann} are equivalent to the Hamiltonian
equations with constraints
\begin{eqnarray}
&& \dot q=\frac{\partial H}{\partial p}-\mu\frac{\partial \varphi_1}{\partial p}-\nu\frac{\partial \varphi_2}{\partial p}=Ep-\nu q,  \label{ham1} \\
&& \dot p=-\frac{\partial H}{\partial
q}+\mu\frac{\partial \varphi_1}{\partial q}+\nu\frac{\partial
\varphi_2}{\partial q}=-EUq+\mu Eq+\nu p, \label{ham2}
\end{eqnarray}
where the multipliers $\mu,\nu$, determined from the conditions
$\dot \varphi_1=\dot\varphi_2=0$ are given by
\[
\mu=-\frac{1}{\langle q,q\rangle}\left(\langle p,p \rangle-\langle
Uq,q\rangle\right), \qquad \nu=0.
\]

Let
\begin{equation}\label{QL}
Q_\lambda(x,y)=\langle (\lambda\mathbf I-U)^{-1} x,y\rangle=\sum_i
\frac{\tau_i x_i y_i}{\lambda-U_i}.
\end{equation}

\begin{thm}\label{cetvrta}
The Neumann flow \eqref{ham1}, \eqref{ham2} implies the matrix
representation
\begin{equation}\label{neumannLA}
\frac{d}{dt}\mathcal{L}_{q,p}(\lambda)=[\mathcal{L}_{q,p}(\lambda),\mathcal{A}_{q,p}(\lambda)],
\end{equation}
with $2\times2$ matrices depending on the parameter $\lambda$
\begin{align*}
\mathcal{L}_{q,p}(\lambda)&=\left(\begin{array}{cc}
-Q_{\lambda}(q,Ep) & -Q_{\lambda}(q,q)\\
c+Q_{\lambda}(p,p) & Q_{\lambda}(q,Ep)
\end{array}\right),  \quad
\mathcal{A}_{q,p}(\lambda)= \left(\begin{array}{cc}
0 & 1 \\
\mu-\lambda & 0
\end{array}\right).
\end{align*}

The system is completely integrable.
For $U_i\ne U_j$, $i\ne j$, from the expression
\begin{equation}\label{DETc}
\det \mathcal
L_{q,p}(\lambda)=Q_\lambda(q,q)(c+Q_\lambda(p,p))-Q_\lambda(q,Ep)^2=\sum_{i=1}^n\frac{f_i(q,p)}{\lambda-U_i},
\end{equation}
we obtain a complete set of integrals
\begin{equation}
\label{int-c} f_i(q,p)=c\cdot \tau_i q_i^2+\sum_{j\neq
i}\frac{\tau_i\tau_j(\tau_jp_jq_i-\tau_iq_jp_i)^2}{U_i- U_j},
\end{equation}
where $\{f_i,f_j\}_D=0$, $i,j=1,\dots,n$, and $\sum_i f_i \equiv 1$.
\end{thm}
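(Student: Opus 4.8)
The plan is to read \eqref{neumannLA} as an isospectral evolution and extract the integrals as spectral invariants of $\mathcal{L}_{q,p}(\lambda)$. First I would verify the Lax equation by direct differentiation. Using the equations of motion in the form $\dot q=Ep$ and $\dot p=-EUq+\mu Eq$ (that is \eqref{ham1}, \eqref{ham2} with $\nu=0$) together with $\tau_i^2=1$, one computes the three scalar derivatives $\tfrac{d}{dt}Q_\lambda(q,q)=2Q_\lambda(q,Ep)$, $\tfrac{d}{dt}Q_\lambda(q,Ep)=Q_\lambda(p,p)+c+(\mu-\lambda)Q_\lambda(q,q)$ and $\tfrac{d}{dt}Q_\lambda(p,p)=2(\mu-\lambda)Q_\lambda(q,Ep)$; here the identity $\sum_i\tau_iU_iq_i^2/(\lambda-U_i)=-c+\lambda Q_\lambda(q,q)$ coming from $\langle q,q\rangle=c$, and the constraint $\langle q,Ep\rangle=0$, are both used. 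Substituting these into the entrywise form of $[\mathcal{L}_{q,p}(\lambda),\mathcal{A}_{q,p}(\lambda)]$, which for the trace-free $\mathcal{L}_{q,p}$ and the given $\mathcal{A}_{q,p}$ reduces to four scalar identities, reproduces \eqref{neumannLA}. This step is a finite calculation rather than a genuine difficulty.

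Since $\mathcal{L}_{q,p}(\lambda)$ is trace-free, $\det\mathcal{L}_{q,p}(\lambda)=-\tfrac12\tr\mathcal{L}_{q,p}(\lambda)^2$, and the Lax equation makes $\tr\mathcal{L}_{q,p}(\lambda)^m$ invariant for every $m$ and every fixed $\lambda$; hence $\det\mathcal{L}_{q,p}(\lambda)$ is a one-parameter family of integrals. Expanding it in partial fractions gives \eqref{DETc} with each $f_i$ the residue at $\lambda=U_i$. To identify the $f_i$ with \eqref{int-c} I would compute these residues: the term $cQ_\lambda(q,q)$ contributes $c\tau_iq_i^2$, while for a product $\big(\sum_k A_k/(\lambda-U_k)\big)\big(\sum_l B_l/(\lambda-U_l)\big)$ the residue at $U_i$ equals $\sum_{j\ne i}(A_iB_j+A_jB_i)/(U_i-U_j)$. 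Taking $A_k=\tau_kq_k^2,\ B_k=\tau_kp_k^2$ and $A_k=B_k=q_kp_k$ for the two pieces of $Q_\lambda(q,q)Q_\lambda(p,p)-Q_\lambda(q,Ep)^2$, the numerators combine through the identity $\tau_i\tau_j(p_j^2q_i^2+q_j^2p_i^2)-2q_ip_iq_jp_j=\tau_i\tau_j(\tau_jp_jq_i-\tau_iq_jp_i)^2$ into precisely \eqref{int-c}. The relation $\sum_i f_i\equiv 1$ then follows from $\sum_i f_i=\lim_{\lambda\to\infty}\lambda\det\mathcal{L}_{q,p}(\lambda)$: the leading coefficient is $c\,\langle q,q\rangle=c^2=1$ since $c=\pm1$, the remaining terms being $O(\lambda^{-1})$. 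As $T^*S^{n-1}_c$ has dimension $2n-2$ and the single relation $\sum_i f_i=1$ leaves $n-1$ independent functions, this supplies the correct number for Liouville integrability.

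The substantive point, and the \emph{main obstacle}, is the involutivity $\{f_i,f_j\}_D=0$ with respect to the Dirac bracket \eqref{Dirac_bracket2}. My plan is to prove the stronger statement $\{\det\mathcal{L}_{q,p}(\lambda),\det\mathcal{L}_{q,p}(\mu)\}_D=0$ for all $\lambda,\mu$ and read off involution of the residues. The natural tool is the linear ($r$-matrix) structure: writing $\mathcal{L}_{q,p}(\lambda)=N+\sum_k \ell_k/(\lambda-U_k)$, where $N$ carries only the lower-left entry $c$, the residue matrices $\ell_k$ are trace-free and nilpotent ($\det\ell_k=0$), and their entries close under the canonical bracket into commuting copies of $\mathfrak{sl}(2)$ twisted by the signs $\tau_k$. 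I would therefore establish a relation of the form $\{\mathcal{L}(\lambda)\otimes 1,\,1\otimes\mathcal{L}(\mu)\}=[\,r(\lambda-\mu),\,\mathcal{L}(\lambda)\otimes 1+1\otimes\mathcal{L}(\mu)\,]$ with a rational permutation-type $r$-matrix adapted to the signature $E$, which yields involution of the spectral invariants. The delicate part is twofold: the constant entry $c$ in $N$ encodes the constraint $\langle q,q\rangle=c$, so the $r$-matrix computation must respect this normalization, and one must pass from the ambient canonical bracket to the Dirac bracket, checking that the corrections in \eqref{Dirac_bracket2} built from $\{\varphi_a,\cdot\}$ do not spoil the relation on $T^*S^{n-1}_c$. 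Concretely I would verify that the Hamiltonian fields of $\varphi_1,\varphi_2$ act on $\det\mathcal{L}_{q,p}(\lambda)$ in a controlled way on the constraint surface so that the Dirac correction cancels; alternatively the computation can be organized exactly as in the Euclidean Neumann system, the signs $\tau_i$ entering only through $E$ and leaving the combinatorial structure of the involution proof unchanged.
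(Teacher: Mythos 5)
The paper offers no proof of Theorem \ref{cetvrta}: it is quoted as a known pseudo--Euclidean analogue of the classical Neumann Lax pair, with the work deferred to the references (Veselov's Appendix B of \cite{Ves2}, \cite{FJ}, \cite{FeJo}). So there is nothing to compare your argument against line by line; judged on its own terms, the computational core of your proposal is correct. I checked the three derivative identities: with $\dot q=Ep$, $E\dot p=-Uq+\mu q$ and the reduction $\sum_i\tau_iU_iq_i^2/(\lambda-U_i)=-c+\lambda Q_\lambda(q,q)$ (from $\varphi_1=c$) and $\sum_ip_iq_i=0$ (from $\varphi_2=0$), the entrywise comparison with $[\mathcal L,\mathcal A]$ does reproduce \eqref{neumannLA}; the double poles of $Q_\lambda(q,q)Q_\lambda(p,p)$ and $Q_\lambda(q,Ep)^2$ cancel because $\tau_i^2=1$, so the residue computation and the algebraic identity $\tau_i\tau_j(p_j^2q_i^2+q_j^2p_i^2)-2q_ip_iq_jp_j=\tau_i\tau_j(\tau_jp_jq_i-\tau_iq_jp_i)^2$ give \eqref{int-c}, and $\sum_if_i=\lim_{\lambda\to\infty}\lambda\det\mathcal L=c^2=1$ is right.

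The one genuinely incomplete piece is the involutivity $\{f_i,f_j\}_D=0$, which you correctly single out as the main obstacle but only sketch via an $r$-matrix relation together with an unexamined compatibility with the Dirac correction in \eqref{Dirac_bracket2}. That worry has a cleaner resolution than the one you propose, and you should use it: the canonical Hamiltonian flow of $\varphi_1=\langle q,q\rangle$ is the shift $q\mapsto q$, $p\mapsto p+2sEq$, under which $Q_\lambda(q,q)$ is fixed, $Q_\lambda(q,Ep)\mapsto Q_\lambda(q,Ep)+2sQ_\lambda(q,q)$ and $Q_\lambda(p,p)\mapsto Q_\lambda(p,p)+4sQ_\lambda(q,Ep)+4s^2Q_\lambda(q,q)$, so $\det\mathcal L_{q,p}(\lambda)$ is exactly invariant; hence $\{\varphi_1,f_i\}=0$ identically on $\R^{2n}$, the numerator of the Dirac correction vanishes for every pair $(i,j)$, and $\{f_i,f_j\}_D=\{f_i,f_j\}$. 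One is then reduced to the canonical involutivity of the sign-twisted Uhlenbeck integrals on $\R^{2n}$, which is the classical finite computation (the $\tau_i$ only relabel $p\mapsto Ep$), and the $r$-matrix machinery is not needed. Two smaller points: ``complete set'' also requires generic functional independence of $n-1$ of the $f_i$ on $T^*S^{n-1}_c$, which you count but do not verify; and note that the paper itself effectively imports these facts from Theorem \ref{sesta} and \cite{Jov2} via the B\"acklund relation, so your direct route is a legitimate alternative provided the involutivity step is actually carried out.
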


A "big" $n\times n$ matrix representation and integration of equations \eqref{c-neumann} in the signature $(n-1,1)$, i.e,
of the Neumann system in the Lobachevsky space is given by Veselov (see Appendix B, \cite{Ves2}).
A generalization of the Neumann system to the Stiefel varieties, as well as its
integrable discretization, is given in \cite{FJ} and \cite{FeJo}, respectively.

\subsection{Discrete Legendre transformation for $c=\pm 1$.}

Following \cite{MV, Ves3, Su}, we consider the associated discrete cotangent bundle dynamics of the Heisenberg system.
Let  $\mathfrak M_c$, $c=\pm 1$, be a domain
 within $T^*S^{n-1}_c$ defined by the inequalities
\begin{equation}\label{nejednakost}
D_c(q,p)=\langle J^{-2}E p,q\rangle^2 -\langle J^{-2} q,q\rangle(\langle J^{-2}p,p\rangle-c)> 0, \quad \langle J^{-2}q,q\rangle\ne 0.
\end{equation}

\begin{thm}\label{peta}
The relations $\Psi$ defined by
\begin{align}
Q=& EJ^{-1}p+\mu J^{-1}q,\label{Q}\\
P=& -EJq+\mu EQ,\label{P}
\end{align}
where $\mu$ is the solution of the quadratic equation
\begin{equation}\label{kvadratna}
\langle J^{-2} q,q\rangle \mu^2+2\mu \langle J^{-2}E p,q\rangle+\langle J^{-2}p,p\rangle-c=0,
\end{equation}
define $1:2$ symplectic correspondence $\Psi\colon \mathfrak M_c\to\mathfrak M_c$ ($c=\pm 1$).
\end{thm}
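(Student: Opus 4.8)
The plan is to recognize $\Psi$ as the symplectic map generated, in the sense of Theorem \ref{prva}, by the Heisenberg Lagrangian itself, with $\mu$ playing the role of the Lagrange multiplier attached to the constraint $\langle q,q\rangle=c$. Concretely, I would take two copies of the cotangent bundle realized as in \eqref{pseudosphere}, namely $M\colon f_1=\langle q,q\rangle-c=0,\ f_2=\langle q,Ep\rangle=0$ and $N\colon F_1=\langle Q,Q\rangle-c=0,\ F_2=\langle Q,EP\rangle=0$, with the canonical forms $dp\wedge dq$ and $dP\wedge dQ$. These are symplectic submanifolds for $c=\pm1$ since $\{f_1,f_2\}=2\langle q,q\rangle=2c\ne0$. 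The candidate generating function is $S(q,Q)=\langle Jq,Q\rangle$, i.e. the Heisenberg Lagrangian $\mathbf L$, which is symmetric in its two arguments because $\langle Jq,Q\rangle=\langle JQ,q\rangle$.

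First I would check that \eqref{GF2} for this $S$ reproduces \eqref{Q}, \eqref{P}. One computes $\partial S/\partial q=EJQ$, $\partial S/\partial Q=EJq$, while $\partial f_1/\partial q=2Eq$ and $\partial F_1/\partial Q=2EQ$, so \eqref{GF2} reads $p=EJQ+2\lambda Eq$ and $P=-EJq-2\Lambda EQ$. Solving the first for $Q$ gives $Q=EJ^{-1}p-2\lambda J^{-1}q$, and comparison with \eqref{Q}, \eqref{P} shows the two systems coincide exactly when $\lambda=\Lambda=-\mu/2$. Thus the correspondence is of the generating-function form \eqref{GF2}, and the mixed Hessian $\partial^2 S/\partial q\,\partial Q=EJ$ is nondegenerate since $\det(EJ)\ne0$; this is the standard nondegeneracy guaranteeing that \eqref{GF2} defines a local diffeomorphism on each branch, so Theorem \ref{prva} will apply branchwise.

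Next I would pin down the multiplier and the $1:2$ structure, and verify that $\Psi$ lands in $T^{*}S^{n-1}_c$. Imposing $F_1=0$ on \eqref{Q}, that is $\langle Q,Q\rangle=c$, and expanding $\langle EJ^{-1}p+\mu J^{-1}q,\,EJ^{-1}p+\mu J^{-1}q\rangle$ with $E^2=\mathbf I$, gives precisely the quadratic \eqref{kvadratna} for $\mu$. On $\mathfrak M_c$ its leading coefficient $\langle J^{-2}q,q\rangle$ is nonzero and its discriminant equals $4D_c(q,p)>0$ by \eqref{nejednakost}, so there are two distinct smooth real roots $\mu_\pm$, producing the two branches $\Psi_\pm$ and hence the $1:2$ correspondence. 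The remaining constraint $F_2=\langle Q,EP\rangle=0$ then holds automatically: substituting \eqref{P} gives $\langle Q,EP\rangle=-\langle Q,Jq\rangle+\mu\langle Q,Q\rangle$, and using $\langle q,Ep\rangle=p^{T}q=0$ together with $\langle q,q\rangle=c$ one finds $\langle Q,Jq\rangle=\mu c$, whence $\langle Q,EP\rangle=0$.

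With these in hand, each branch $\Psi_\pm$ is a smooth map whose graph is given by \eqref{GF2}, so Theorem \ref{prva} yields that each is symplectic, and $\Psi$ is the asserted $1:2$ symplectic correspondence. The step I expect to be the main obstacle is verifying that the \emph{image} again lies in $\mathfrak M_c$, i.e. that $\langle J^{-2}Q,Q\rangle\ne0$ and $D_c(Q,P)>0$ are preserved, so that $\Psi$ genuinely maps $\mathfrak M_c$ to $\mathfrak M_c$ (and each branch is a diffeomorphism onto its image). I would attack this by exploiting the symmetry of $S(q,Q)=\langle Jq,Q\rangle$ under $(q,p)\leftrightarrow(Q,P)$, which makes the inverse correspondence of the same type --- one solves \eqref{P} for $q$ and then \eqref{Q} for $p$, the relevant multiplier being fixed by the analogous quadratic on the target --- so that the discriminant condition $D_c>0$ transfers between source and target and domain preservation follows.
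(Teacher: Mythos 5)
Your proposal is correct and follows essentially the same route as the paper: the quadratic generating function $S=\langle Jq,Q\rangle$ with the constraints \eqref{pseudosphere}, Theorem \ref{prva} applied branchwise, and the quadratic \eqref{kvadratna} for the multiplier, whose discriminant is $4D_c>0$ on $\mathfrak M_c$. The one step you defer --- that the image again lies in $\mathfrak M_c$ --- is exactly what the paper's factorization $\Psi=\mathbb L_2\circ \mathbb L_1^{-1}$ through $P_c$ supplies, via the identity $D_c=\langle J^{-1}q,Q\rangle^2$ on the image of either discrete Legendre map (the remaining condition $\langle J^{-2}Q,Q\rangle\ne 0$ is not propagated but handled by the ``flow stops'' convention), so your symmetry argument is indeed the right way to close it.
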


\begin{proof}
Consider a transformation of $T^*S^{n-1}_c$ defined by constraints \eqref{pseudosphere}
 and the generating function given by the discrete Lagrangian:
\[
 S(q,Q)=\langle JQ,q\rangle.
\]
The equations \eqref{GF2} read
\begin{eqnarray}
&&p=EJQ+\lambda Eq, \label{L1}\\
&&P=-EJq-\Lambda EQ, \label{L2}
\end{eqnarray}
where the Lagrange multipliers
$
\lambda=\Lambda=-\langle JQ,q\rangle/c
$
are determined from the constraints
$\langle Eq,p\rangle =\langle EQ,P\rangle=0$.

Let
$\mathbb L_1\colon P_c(q,Q) \to T^* S^{n-1}(q,p)$, $\mathbb L_2\colon  P_c(q,Q) \to T^* S^{n-1}(Q,P)$
be the mappings defined by \eqref{L1} and \eqref{L2}, respectively.
They can can be seen as a discrete analogue of the Legendre transformation
\eqref{LT}. Let
\[
\mathfrak N_c=\mathbb L_1(P_c)=\mathbb L_2(P_c).
\]

We have that $D_c(q,p)$ is greater then zero on $\mathfrak N_c$:
\begin{align*}
D_c & =\langle J^{-2}E p,q\rangle^2 -\langle J^{-2} q,q\rangle(\langle J^{-2}p,p\rangle-c)\\
& =\langle J^{-2}q,JQ+\lambda q\rangle^2 -\langle J^{-2} q,q\rangle(\langle J^{-1}Q+\lambda J^{-2}q,JQ+\lambda q\rangle-c)\\
&=\langle J^{-1}q,Q\rangle^2>0,
\end{align*}
since $\langle J^{-1}q,Q\rangle\ne 0$ at $P_c$.
Thus, $\mathfrak N_c$ is a subset of $\mathfrak M_c$.

Vice versa, assume $(q,p)\in\mathfrak M_c$. The relation \eqref{L1}, can be rewritten into the form \eqref{Q}, where $\mu$ is unknown multiplier. From the constraint $\langle Q,Q\rangle=c$ we get the equation \eqref{kvadratna} determining $\mu$ as a two-valued function of $(q,p)$
\[
\mu(q,p)=\frac{-\langle J^{-2}E p,q\rangle\pm \sqrt{D_c(q,p)}}{\langle J^{-2} q,q\rangle}.
\]
As a result we obtain two points $Q_1$ and $Q_2$ such that $(q,p)=\mathbb L_1(q,Q_1)=\mathbb L_1(q,Q_2)$, and $\mathfrak M_c=\mathfrak N_c$.

Therefore, according to Theorem \ref{prva}, we get a two-valued
symplectic transformation
\[
\Psi\colon \mathfrak M_c(q,p)\to
\mathfrak M_c(Q,P)
\]
such that $\Psi(q,p)=\mathbb L_2(\mathbb L_1^{-1}(q,p))$.
\end{proof}

Since all equations are algebraic, we have that \eqref{P}, \eqref{Q} is a symplectic 1:2 correspondence on $T^*S^{n-1}_c$ for complexified objects
with $D_c=0$ defining the set of branch points.
Note that the discriminant
$4D_c$ is the first integral of \eqref{P}, \eqref{Q}.
It can be verified directly. Also it follows from the
Lax representation \eqref{disr_Lax_N} given below. Namely,
\[
D_c=-\det\mathcal L_k\vert_{\lambda=-\epsilon^{-2}} .
\]

Recall that the commutative diagram \eqref{comDiagram} relates the Heisenberg system with the virtual billiard dynamics.
Now we have:

\begin{lem}\label{komutativna}
The following diagram is commutative
\begin{equation*}
\xymatrix@R45pt@C60pt{
P_c(q,Q) \ar@{^{}->}[r]^{\mathbb L_1} \ar[dr]^{\mathbb L_2} \ar[d]_{\Phi} & \mathfrak M_c(q,p) \ar[d]^{\Psi} \\
P_c(q,Q)  \ar@{^{}->}[r]^{\mathbb L_1}                              & \mathfrak M_c(Q,P) }
\end{equation*}
in the sense that two-valued map $\Psi$ satisfies $\Psi(q,p)=\mathbb L_1\big(\Phi\big(\mathbb L_1^{-1}(q,p)\big)\big)$.
\end{lem}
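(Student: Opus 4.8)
The plan is to reduce the stated identity to a single pointwise relation between the discrete Legendre maps and then verify that relation by a direct computation with the Lagrange multipliers.

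By the construction in the proof of Theorem \ref{peta}, the two-valued correspondence is defined by $\Psi = \mathbb L_2 \circ \mathbb L_1^{-1}$. Hence the assertion $\Psi(q,p) = \mathbb L_1\big(\Phi(\mathbb L_1^{-1}(q,p))\big)$ is equivalent to the pointwise identity $\mathbb L_2 = \mathbb L_1\circ\Phi$ of \emph{single--valued} maps on $P_c$. Once this is established, composing on the right with the two-valued $\mathbb L_1^{-1}$ yields the lemma with no further matching: the two branches of $\Psi$ over $(q,p)$ are precisely the images under $\mathbb L_2 = \mathbb L_1\circ\Phi$ of the two preimages $(q,Q_1),(q,Q_2)$ of $(q,p)$. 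So it suffices to prove $\mathbb L_2(q,Q)=\mathbb L_1(\Phi(q,Q))$ for $(q,Q)\in P_c$.

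To do so I would first write $\Phi(q,Q)=(Q,\bar Q)$ explicitly: from \eqref{neumann} and \eqref{lambda} one has $\bar Q=\lambda J^{-1}Q-q$ with $\lambda=2\langle J^{-1}Q,q\rangle/\langle J^{-2}Q,Q\rangle$, equivalently $J\bar Q=\lambda Q-Jq$. Applying $\mathbb L_1$ (formula \eqref{L1}) to the pair $(Q,\bar Q)$ produces the momentum $\bar p=EJ\bar Q+\bar\lambda\,EQ$ with $\bar\lambda=-\langle J\bar Q,Q\rangle/c$. Using that $E,J$ are diagonal, hence commute, with $E^2=\mathbf I$, and substituting $J\bar Q=\lambda Q-Jq$ together with the constraint $\langle Q,Q\rangle=c$, the first coordinate of $\mathbb L_1(\Phi(q,Q))$ is $Q$ and the second collapses to $\bar p=-EJq+(\lambda+\bar\lambda)\,EQ$. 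On the other side, $\mathbb L_2(q,Q)=(Q,P)$ with $P=-EJq-\Lambda\,EQ$ and $\Lambda=-\langle JQ,q\rangle/c$ by \eqref{L2}.

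The first coordinates already agree, so everything reduces to the multiplier identity $\lambda+\bar\lambda=-\Lambda=\langle JQ,q\rangle/c$, and proving this is the only real step. It follows by inserting $J\bar Q=\lambda Q-Jq$ into $\bar\lambda=-\langle J\bar Q,Q\rangle/c$, using $\langle Q,Q\rangle=c$ to obtain $\bar\lambda=-\lambda+\langle Jq,Q\rangle/c$, and invoking the symmetry $\langle Jq,Q\rangle=\langle JQ,q\rangle$. This gives $\bar p=P$ and hence $\mathbb L_1\circ\Phi=\mathbb L_2$. The point to watch — and the only genuine obstacle — is domain compatibility: one must confirm that $\Phi(q,Q)\in P_c$ (so that $\mathbb L_1$ is defined on it and the nonvanishing conditions cutting out $P_c$ and $\mathfrak M_c$ are respected), which is guaranteed by $\Phi\colon P_c\to P_c$, and that the branch bookkeeping of $\mathbb L_1^{-1}$ is consistent with the two roots $\mu$ of \eqref{kvadratna} entering \eqref{Q}, \eqref{P}.
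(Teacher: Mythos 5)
Your proposal is correct and follows essentially the same route as the paper: both reduce the lemma to the single-valued identity $\mathbb L_1\circ\Phi=\mathbb L_2$ on $P_c$ (using $\Psi=\mathbb L_2\circ\mathbb L_1^{-1}$ from the proof of Theorem \ref{peta}) and verify it by substituting $J\bar Q=\lambda Q-Jq$ into the $\mathbb L_1$-formula and using $\langle Q,Q\rangle=c$ together with the symmetry $\langle Jq,Q\rangle=\langle JQ,q\rangle$. Your isolation of the multiplier identity $\lambda+\bar\lambda=-\Lambda$ is just a slight repackaging of the paper's one-line computation of $\bar P_i=P_i$.
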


Lemma \ref{komutativna} is a direct corollary of the definition of discrete Legendre transformations \eqref{L1}, \eqref{L2} and the equation of the stationary
configuration \eqref{neumann}. For the completeness of the exposition the proof is included in the Appendix.

As a result, for $c=\pm 1$, we refer to the correspondence
\begin{align}
& q_{k+1}=EJ^{-1}p_k +\mu_k J^{-1} q_k,\label{n1}\\
& p_{k+1} =  -EJq_{k}+\mu_k Eq_{k+1},  \qquad k\in\mathbb
Z,\label{n2}\\
& \mu_k=\big(-\langle J^{-2}E p_k,q_k\rangle\pm \sqrt{D_c(q_k,p_k)}\big)/{\langle J^{-2} q_k,q_k\rangle}
\nonumber
\end{align}
as the {\it Heisenberg model} on
$\mathfrak M_c$. If $\langle J^{-2} q_{k+1},q_{k+1}\rangle=0$, by definition the flow stops.

By subtracting \eqref{n1} and \eqref{n2}, where we set $k$ instead
of $k+1$, we obtain the equation of stationary configuration
\eqref{neumann} with the Lagrange multipliers \eqref{lambda} and $\mu_k$ related by
$$
\lambda_k=\mu_k+\mu_{k-1}, \qquad k\in\mathbb Z.
$$

By using the integrals \eqref{int-neumann}, Lemma \ref{komutativna} and the equation
\eqref{n1} we get:

\begin{thm}\label{sesta}
The Heisenberg system \eqref{n1}, \eqref{n2} is completely
integrable with a complete set of integrals
\begin{equation}\label{int-neumann2}
f_i(q,p)=c\cdot \tau_i q_i^2+\sum_{j\neq
i}\frac{\tau_i\tau_j(\tau_jp_jq_i-\tau_iq_jp_i)^2}{
J^2_i- J^2_j},
\end{equation}
where
$\{f_i,f_j\}_D=0$, $i,j=1,\dots,n$, and
 $\sum_i f_i \equiv 1$.
\end{thm}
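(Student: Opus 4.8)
The plan is to transport the known first integrals \eqref{int-neumann} of the mapping $\Phi$ on $P_c$ to functions on $\mathfrak M_c$ via the discrete Legendre transformation $\mathbb L_1$, and to identify the result with the Neumann integrals already treated in Theorem \ref{cetvrta}. The commutative diagram of Lemma \ref{komutativna}, namely $\Psi(q,p)=\mathbb L_1\big(\Phi\big(\mathbb L_1^{-1}(q,p)\big)\big)$, will immediately yield $\Psi$--invariance, while involutivity and the trace relation will be inherited from Theorem \ref{cetvrta}.

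First I would write \eqref{int-neumann} in the generic form $f_i(q,Q)$ on $P_c(q,Q)$, with $q=q_{k-1}$, $Q=q_k$, and substitute the defining relation \eqref{L1} of $\mathbb L_1$ in the solved form $JQ=Ep-\lambda q$ (equivalently \eqref{Q} with $\mu=-\lambda$). The only nonlinear building block in \eqref{int-neumann} is the antisymmetric combination $(JQ)_j q_i-q_j(JQ)_i$. Substituting $(JQ)_m=\tau_m p_m-\lambda q_m$, the two terms proportional to $\lambda$ cancel and one is left with $\tau_j p_j q_i-\tau_i p_i q_j$. Hence $f_i\circ\mathbb L_1$ is exactly the function \eqref{int-neumann2}, independent of the multiplier $\lambda$.

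The cancellation of $\lambda$ is the key point, and it has a useful consequence: although $\mathbb L_1$ is two--to--one (given $(q,p)$ there are two preimages $Q_1,Q_2$ from the two roots of \eqref{kvadratna}), the pull-back $f_i\circ\mathbb L_1^{-1}$ is single--valued, so \eqref{int-neumann2} is a well--defined function on $\mathfrak M_c$. Using Lemma \ref{komutativna} together with the $\Phi$--invariance $f_i\big(\Phi(q,Q)\big)=f_i(q,Q)$, I would then check invariance under both branches of $\Psi$: for $(q',p')=\Psi(q,p)$ the point $\Phi(\mathbb L_1^{-1}(q,p))$ lies in $\mathbb L_1^{-1}(q',p')$, so $f_i(q',p')=f_i\big(\Phi(\mathbb L_1^{-1}(q,p))\big)=f_i\big(\mathbb L_1^{-1}(q,p)\big)=f_i(q,p)$.

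Finally, comparing \eqref{int-neumann2} with the Neumann integrals \eqref{int-c} shows they are literally the same functions after setting $U_i=J_i^2$; since $\mathfrak M_c\subset T^*S^{n-1}_c$ carries the same Poisson--Dirac bracket \eqref{Dirac_bracket2}, the involutivity $\{f_i,f_j\}_D=0$ and the relation $\sum_i f_i\equiv 1$ (which is $\sum_i f_i\equiv c^2$ with $c=\pm1$) are inherited directly from Theorem \ref{cetvrta}. With $n-1$ independent integrals in involution on the $(2n-2)$--dimensional symplectic manifold $\mathfrak M_c$, complete integrability follows. I expect the only genuinely delicate step to be the verification that the $\lambda$--dependence drops out of the antisymmetric combination, since this single cancellation is what makes \eqref{int-neumann2} simultaneously well defined on $\mathfrak M_c$ and $\Psi$--invariant; everything else is bookkeeping plus an appeal to Theorem \ref{cetvrta}.
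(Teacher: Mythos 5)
Your proposal is correct and follows essentially the same route as the paper, which derives Theorem \ref{sesta} precisely ``by using the integrals \eqref{int-neumann}, Lemma \ref{komutativna} and the equation \eqref{n1}'': substituting $JQ=Ep-\lambda q$ into the antisymmetric combination kills the multiplier, giving \eqref{int-neumann2}, and invariance plus involutivity follow from Lemma \ref{komutativna} and the identification with the Neumann integrals \eqref{int-c}. Your explicit remark that the cancellation of $\lambda$ makes $f_i\circ\mathbb L_1^{-1}$ single--valued despite the $1{:}2$ nature of the correspondence is a worthwhile detail the paper leaves implicit.
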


\subsection{B\"acklund transformation}
Usually, a B\"acklund transformation for a system of differential
equations is a mapping which takes solution into solutions, or in
the framework of integrable systems, the symplectic mapping which
preserves Liouville folliation \cite{Su}. We saw that the
Moser--Veselov choice $J(\epsilon)=\mathbf I+\frac12\epsilon^2U$ for small
$\epsilon$ approximates the Neumann dynamics \eqref{c-neumann}.
However, it does not preserve the foliation given by
\eqref{int-c}. Instead, as in the Euclidean case (see
\cite{Su}), we take
$$
J(\epsilon)=\frac{1}{\epsilon}\sqrt{\mathbf I+\epsilon^2
U}=\frac{1}{\epsilon}\mathbf I+\frac12\epsilon U+\dots
$$
Then, from \eqref{neumann}, by taking $q_k=q(t_0+k\epsilon)$,
$\epsilon \approx 0$, we again obtain \eqref{aproksimacija} with
$\lambda$ replaced by $\epsilon\lambda$. Therefore, the Heisenberg
system with $J(\epsilon)=\frac{1}{\epsilon}\sqrt{\mathbf I+\epsilon^2 U}$ is
also a discretization of the Neumann system on the pseudosphere
$S^{n-1}_c$. On the other hand
$$
{1}/(J_i^2(\epsilon)-J_j^2(\epsilon))={1}/(U_i-U_j)
$$
and the integrals \eqref{int-neumann2} reduce to the integrals
\eqref{int-c}. Therefore, the corresponding Heisenberg model is a
B\"acklund transformation of the Neumann system. Moreover, we have
the following Lax representation depending on the parameter $\epsilon$.

\begin{thm}\label{sedma}
Let  $J^2(\epsilon)=U+\epsilon^{-2}\mathbf I$.
The Heisenberg system \eqref{n1}, \eqref{n2} on
$\mathfrak M_c$, $c=\pm 1$, implies the matrix equations with a
spectral parameter $\lambda$
\begin{gather} \label{disr_Lax_N}
\mathcal L_{k+1}(\lambda) = \mathcal M_k (\lambda) \mathcal
L_k(\lambda)\mathcal M_k (\lambda)^{-1},
\end{gather}
where
\begin{align*}
\mathcal{L}_{k}(\lambda)&=\left(\begin{array}{cc}
-Q_{\lambda}(q_k,Ep_k) & -Q_{\lambda}(q_k,q_k)\\
c+Q_{\lambda}(p_k,p_k) & Q_{\lambda}(q_k,Ep_k)
\end{array}\right),\\
\mathcal{M}_{k}(\lambda)&= \left(\begin{array}{cc}
-\mu_k & 1 \\
\mu_k^2-\lambda-\epsilon^{-2} & -\mu_k
\end{array}\right), \qquad c=\pm 1,
\end{align*}
and $Q_\lambda$ id given by \eqref{QL}.
\end{thm}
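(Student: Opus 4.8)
The plan is to rewrite the asserted Lax equation in the equivalent multiplicative form $\mathcal L_{k+1}(\lambda)\mathcal M_k(\lambda)=\mathcal M_k(\lambda)\mathcal L_k(\lambda)$ and to verify this as an identity of $2\times2$ matrices whose entries are rational in $\lambda$. Since $\mathcal M_k$ is polynomial in $\lambda$ (its only $\lambda$--dependent entry being $\mu_k^2-\lambda-\epsilon^{-2}$), while by \eqref{QL} every entry of $\mathcal L_k$ is a sum of simple fractions $\tau_j x_j y_j/(\lambda-U_j)$, both sides are rational with at worst simple poles at $\lambda=U_1,\dots,U_n$ and a polynomial part of degree at most one. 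Hence it suffices to check separately that (a) the residues at each $\lambda=U_i$ agree and (b) the expansions as $\lambda\to\infty$ agree. First I would record the pointwise equations of motion coming from \eqref{n1}, \eqref{n2}, namely $(Ep_k)_i=J_i(q_{k+1})_i-\mu_k(q_k)_i$ and $(Ep_{k+1})_i=-J_i(q_k)_i+\mu_k(q_{k+1})_i$, together with the relation $U_i+\epsilon^{-2}=J_i^2$ from $J^2(\epsilon)=U+\epsilon^{-2}\mathbf I$, so that $\mathcal M_k(U_i)$ has lower-left entry $\mu_k^2-J_i^2$ and determinant $J_i^2\neq0$.

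The key structural observation is that $\mathrm{Res}_{U_i}\mathcal L_k$ is a rank-one matrix. Writing $u=(q_k)_i$ and $w=(Ep_k)_i$, a direct reading of the entries gives $\mathrm{Res}_{U_i}\mathcal L_k=\tau_i\binom{-u}{w}(w,\,u)$, and similarly $\mathrm{Res}_{U_i}\mathcal L_{k+1}=\tau_i\binom{-u'}{w'}(w',\,u')$ with $u'=(q_{k+1})_i$, $w'=(Ep_{k+1})_i$. Because $\mathcal M_k$ is regular at $U_i$, matching residues reduces to the single intertwining identity $(\mathrm{Res}_{U_i}\mathcal L_{k+1})\,\mathcal M_k(U_i)=\mathcal M_k(U_i)\,(\mathrm{Res}_{U_i}\mathcal L_k)$. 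Substituting the equations of motion, one checks the two vector identities $\mathcal M_k(U_i)\binom{-u}{w}=-J_i\binom{-u'}{w'}$ and $(w',\,u')\,\mathcal M_k(U_i)=-J_i\,(w,\,u)$, after which both sides of the residue equation collapse to the same rank-one matrix $-\tau_i J_i\binom{-u'}{w'}(w,\,u)$. I emphasize that this step uses only \eqref{n1}, \eqref{n2} and not the defining quadratic for $\mu_k$.

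For part (b) I would expand both products as $\lambda\to\infty$, using $Q_\lambda\to0$ so that $\mathcal L_k$ tends to the matrix with single nonzero entry $c$ in the lower-left corner, while the $1/\lambda$--coefficient equals $\sum_i\mathrm{Res}_{U_i}\mathcal L_k$, whose entries are $\langle q_k,Ep_k\rangle$, $\langle q_k,q_k\rangle$ and $\langle p_k,p_k\rangle$. The potentially dangerous $\lambda^{1}$ terms cancel on both sides because the relevant product with the nilpotent matrix $\bigl(\begin{smallmatrix}0&0\\1&0\end{smallmatrix}\bigr)$ vanishes, and matching the constant matrices leaves four scalar equations: the diagonal ones are $c=\langle q_{k+1},q_{k+1}\rangle$ and $\langle q_k,q_k\rangle=c$, and the only nontrivial off-diagonal one is $\langle q_k,Ep_k\rangle+\langle q_{k+1},Ep_{k+1}\rangle=0$. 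The first two are the constraints cutting out $S^{n-1}_c$, and this is exactly where the quadratic \eqref{kvadratna} fixing $\mu_k$ enters, since it is precisely the condition $\langle q_{k+1},q_{k+1}\rangle=c$; the last follows from the equations of motion, for the $J_i$--terms cancel in pairs and what remains is $\mu_k(\langle q_{k+1},q_{k+1}\rangle-\langle q_k,q_k\rangle)=\mu_k(c-c)=0$.

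I expect the only genuine obstacle to be organizational: keeping the signature signs $\tau_i$, the pseudo-Euclidean pairings, and the shift $\lambda\mapsto\lambda+\epsilon^{-2}$ (which identifies each pole $U_i$ with $J_i^2$) consistent throughout, and confirming that $\mathcal M_k$ is invertible on the relevant range, $\det\mathcal M_k(\lambda)=\lambda+\epsilon^{-2}$, which equals $J_i^2\neq0$ at the poles, so that the multiplicative identity genuinely encodes the conjugation \eqref{disr_Lax_N}. Once the rank-one factorization of the residues is in hand, both parts are short, and the argument applies verbatim to each branch of the two-valued $\mu_k$.
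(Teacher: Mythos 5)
Your argument is correct, and I could not find a gap in it: the rank--one factorization $\mathrm{Res}_{U_i}\mathcal L_k=\tau_i\left(\begin{smallmatrix}-u\\ w\end{smallmatrix}\right)(w,\,u)$ with $u=(q_k)_i$, $w=(Ep_k)_i$ is right, the two intertwining identities $\mathcal M_k(U_i)\left(\begin{smallmatrix}-u\\ w\end{smallmatrix}\right)=-J_i\left(\begin{smallmatrix}-u'\\ w'\end{smallmatrix}\right)$ and $(w',u')\,\mathcal M_k(U_i)=-J_i(w,u)$ do follow from $w=J_iu'-\mu_k u$, $w'=-J_iu+\mu_k u'$ and $J_i^2=U_i+\epsilon^{-2}$, and the comparison at infinity reduces exactly to $\langle q_k,q_k\rangle=\langle q_{k+1},q_{k+1}\rangle=c$ and $\langle q_k,Ep_k\rangle+\langle q_{k+1},Ep_{k+1}\rangle=0$, all of which hold on $\mathfrak M_c$ for either branch of $\mu_k$. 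Note, however, that the paper gives no proof of Theorem \ref{sedma} at all (only Lemma \ref{karakterizacija}, Theorem \ref{treca}, Lemma \ref{komutativna} and Theorem \ref{deseta} are proved in the Appendix; the $2\times2$ Lax pairs are stated as following the Euclidean construction of Suris). The closest proof actually written out, that of the $n\times n$ representation in Theorem \ref{treca}, proceeds by brute--force expansion of $\mathcal A_k\mathcal L_k$ and $\mathcal L_{k+1}\mathcal A_k$ in powers of $\lambda$ and termwise cancellation. Your route is genuinely different and, in my view, preferable for the $2\times2$ case: decomposing the rational identity into residues plus the polynomial part localizes the computation to a single index $i$, makes transparent that the equations of motion alone handle the poles while the constraint $\langle q,q\rangle=c$ (equivalently the quadratic \eqref{kvadratna} for $\mu_k$) is needed only for the constant term at infinity, and it adapts verbatim to Theorem \ref{deveta} by replacing one $\mu_k$ with $\tilde\mu_k$. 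The trade--off is that the direct expansion used in the paper requires no discussion of the pole structure and degree bounds, which you do need to (and correctly do) justify.
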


\section{Light--like cone and contact integrability.}

\subsection{Discrete Legendre transformation for the light--like case}
Instead of $\varphi_2$, for a description of the cotangent bundle
of the light--like cone $S^{n-1}_0$ we
use the function $\varphi_3=\langle p,q\rangle$:
\[
T^*S^{n-1}_0\colon\quad \varphi_1=\langle q,q\rangle=0, \quad
\varphi_3=\langle q,p\rangle=0.
\]

Then $\{\varphi_1,\varphi_3\}=2\langle Eq,q\rangle\ne
0$, for $q\ne 0$. Denote the new Dirac--Poisson bracket by
$\{\cdot,\cdot\}_D^0$.
Repeating the arguments from the previous section, by taking
$S(q,Q)=\langle q,J Q\rangle$ for a generating function,
we get the discrete Legendre transformations:
\begin{eqnarray*}
&&\mathbb L_1^0\colon P_0(q,Q)\to T^*S^{n-1}_0(q,p), \qquad\,\,\,\,  p=EJQ-  \frac{\langle EJ{Q},q\rangle}{\langle Eq,q\rangle} Eq,  \\
&&\mathbb L_2^0\colon P_0(q,Q)\to T^*S^{n-1}_0(Q,P), \qquad  P=-EJq +\frac{\langle EJQ,q\rangle}{\langle EQ,Q\rangle} EQ,
\end{eqnarray*}
and the 1:2
symplectic correspondence
\[
\Psi\colon \mathfrak M_0\to \mathfrak M_0
\]
given by
\begin{align*}
Q=& EJ^{-1}p+\mu J^{-1}q,\\
P=& -EJq+\tilde\mu EQ,
\end{align*}
where
\[
\mu=\big({-\langle J^{-2}E p,q\rangle\pm \sqrt{D_0(q,p)}}\big)/{\langle J^{-2} q,q\rangle}, \qquad \tilde\mu=\langle EJQ,q\rangle/\langle EQ,Q\rangle,
\]
 and $\mathfrak M_0$ is a subset of $T^*S^{n-1}_0$ defined by the inequalities
\eqref{nejednakost} for $c=0$.

Lemma \ref{komutativna} also applies, which together with
the integrals \eqref{int-neumann} implies the following statement.

\begin{thm}\label{osma}
The Heisenberg system on $\mathfrak M_0$
\begin{align}
q_{k+1} &=EJ^{-1}p_k+\mu_k J^{-1}q_k,\label{n01}\\
p_{k+1} &=  -EJq_{k}+\tilde\mu_k Eq_{k+1}, \qquad k\in\mathbb
Z,\label{n02}
\end{align}
is completely integrable. The complete set of first integrals is
\begin{equation}
f_i(q,p)=\sum_{j\neq
i}\frac{\tau_i\tau_j(\tau_jp_jq_i-\tau_iq_jp_i)^2}{J^2_i- J^2_j}, \label{INTEGRALI}
\end{equation}
where $\{f_i,f_j\}^0_D=0$, $i,j=1,\dots,n$, and
$\sum_i f_i \equiv 0$.
\end{thm}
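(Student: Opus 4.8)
The plan is to deduce the statement from the complete integrability of the generating map $\Phi$ on $P_0$, transported to $\mathfrak M_0$ through the discrete Legendre transformation $\mathbb L_1^0$, in complete parallel with the proof of Theorem \ref{sesta}. As noted just before the theorem, Lemma \ref{komutativna} remains valid in the light--like case, so the two--valued correspondence satisfies $\Psi=\mathbb L_1^0\circ\Phi\circ(\mathbb L_1^0)^{-1}$, where $\mathbb L_1^0$ is the $2:1$ symplectomorphism attached, via Theorem \ref{prva}, to the generating function $S(q,Q)=\langle q,JQ\rangle$. Consequently a function on $\mathfrak M_0$ is an integral of $\Psi$ exactly when its pullback under $\mathbb L_1^0$ is an integral of $\Phi$, and the commuting structure transports once one verifies that the candidate integrals pull back correctly.

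The heart of the matter is the following computation. I take the integrals \eqref{int-neumann} of $\Phi$ with $c=0$, evaluated on a consecutive pair $(q_k,q_{k+1})\in P_0$, and insert the dynamical relation obtained from \eqref{n01} by multiplying with $J$, namely $Jq_{k+1}=Ep_k+\mu_k q_k$. Then in each antisymmetric factor the Lagrange multiplier drops out,
\begin{equation*}
(Jq_{k+1})_j (q_k)_i-(q_k)_j (Jq_{k+1})_i=\tau_j (p_k)_j (q_k)_i-\tau_i (q_k)_j (p_k)_i,
\end{equation*}
since the two $\mu_k(q_k)_j(q_k)_i$ terms cancel. Hence \eqref{int-neumann} evaluated at $(q_k,q_{k+1})$ equals precisely \eqref{INTEGRALI} evaluated at $(q_k,p_k)$. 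This is the decisive point: although $q_{k+1}$, and therefore $Q$, depends on the branch of the root $\mu_k$ of \eqref{kvadratna}, the resulting value is expressed solely through $(q_k,p_k)$ and is independent of that branch. Thus the functions $f_i$ of \eqref{INTEGRALI} are single--valued on $\mathfrak M_0$ and coincide with genuine integrals of the honest symplectic map $\Phi$, so that $f_i\circ\Psi=f_i$ along both branches.

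The remaining assertions are then formal. Involutivity $\{f_i,f_j\}^0_D=0$ follows because $\mathbb L_1^0$ is symplectic onto $T^*S^{n-1}_0$ equipped with the Dirac bracket $\{\cdot,\cdot\}^0_D$ associated with $\varphi_1=\langle q,q\rangle$ and $\varphi_3=\langle q,p\rangle$, so the commuting family $\{f_i\circ\mathbb L_1^0\}$ on $P_0$ (the integrals \eqref{int-neumann} at $c=0$, which are in involution since $\Phi$ is completely integrable) is carried to the commuting family $\{f_i\}$ on $\mathfrak M_0$. The relation $\sum_i f_i\equiv 0$ is inherited from the identity $\sum_i f_i\equiv c^2$ recorded after \eqref{int-neumann}, taken at $c=0$. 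Finally, since $\{\varphi_1,\varphi_3\}=2\langle Eq,q\rangle=2\sum_i q_i^2\ne 0$ for $q\ne 0$, the set $\mathfrak M_0$ is a $2(n-1)$--dimensional symplectic manifold, and the $n$ functions $f_i$ subject to the single relation $\sum_i f_i=0$ supply the $n-1$ independent commuting integrals required for complete integrability.

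I expect the multiplier cancellation displayed above to be the one genuinely load--bearing step; everything else is transported through the symplectomorphism $\mathbb L_1^0$ and through Lemma \ref{komutativna}. The subtlety to watch is that, in contrast with the case $c=\pm1$, the normalization in $\mathbb L_1^0$ and in the bracket now rests on $\langle Eq,q\rangle$ rather than on the constraint value $c$, so one should confirm that $\langle Eq,q\rangle=\sum_i q_i^2$ stays nonzero on $\mathfrak M_0$ (which holds away from $q=0$) and that the involutive family indeed transfers to the bracket $\{\cdot,\cdot\}^0_D$ rather than to the bracket used for $c=\pm1$.
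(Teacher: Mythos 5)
Your proposal is correct and follows exactly the route the paper intends: the paper's entire justification is the one-line remark that Lemma \ref{komutativna} applies and, together with the integrals \eqref{int-neumann} and the substitution $Jq_{k+1}=Ep_k+\mu_k q_k$ from \eqref{n01}, yields the statement. Your cancellation of the multiplier $\mu_k$ in the antisymmetric factors, the transfer of involutivity through $\mathbb L_1^0$ to the bracket $\{\cdot,\cdot\}^0_D$, and the relation $\sum_i f_i\equiv c^2=0$ are precisely the details the authors leave implicit.
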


Again, if $\langle J^{-2} q_{k+1},q_{k+1}\rangle=0$, by definition the flow stops.
Now, the Lagrange multipliers \eqref{lambda} of stationary
configuration \eqref{neumann} and the correspondence \eqref{n01}, \eqref{n02} are
related by
\[
\lambda_k=\mu_k+\tilde\mu_{k-1}, \qquad k\in\mathbb Z,
\]
and we have an analog of Theorem \ref{sedma}.

\begin{thm}\label{deveta}
Let $U=J^2$. The Heisenberg system \eqref{n01}, \eqref{n02} on
$\mathfrak M_0$ implies the matrix equations with a spectral
parameter $\lambda$
\begin{equation} \label{disr_Lax_N0}
\mathcal L_{k+1}(\lambda) = \mathcal M_k (\lambda) \mathcal
L_k(\lambda)\mathcal M_k (\lambda)^{-1},
\end{equation}
where
\begin{align*}
\mathcal{L}_{k}(\lambda)=\left(\begin{array}{cc}
-Q_{\lambda}(q_k,Ep_k) & -Q_{\lambda}(q_k,q_k)\\
Q_{\lambda}(p_k,p_k) & Q_{\lambda}(q_k,Ep_k)
\end{array}\right),\,\,
\mathcal{M}_{k}(\lambda)= \left(\begin{array}{cc}
-\mu_k & 1 \\
\mu_k {\tilde\mu}_k-\lambda & -\tilde{\mu}_k
\end{array}\right),
\end{align*}
and $Q_\lambda$ id given by \eqref{QL}.
\end{thm}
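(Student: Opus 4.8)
The plan is to verify the Lax representation \eqref{disr_Lax_N0} for Theorem~\ref{deveta} by direct computation, following exactly the pattern of Theorem~\ref{sedma}, since the light--like case ($c=0$) differs from that case only by setting $c=0$ in the lower-left entry of $\mathcal{L}_k$ and by replacing the symmetric multiplier structure with the pair $(\mu_k,\tilde\mu_k)$. First I would check that $\det\mathcal{L}_k(\lambda)$ is a conserved quantity whose residues at $\lambda=U_i$ reproduce the integrals \eqref{INTEGRALI}: expanding $\det\mathcal{L}_k=Q_\lambda(q_k,q_k)Q_\lambda(p_k,p_k)-Q_\lambda(q_k,Ep_k)^2$ and taking the partial fraction decomposition via the identity $\frac{1}{(\lambda-U_i)(\lambda-U_j)}=\frac{1}{U_i-U_j}\big(\frac{1}{\lambda-U_i}-\frac{1}{\lambda-U_j}\big)$ should yield $\sum_i f_i(q,p)/(\lambda-U_i)$ with $f_i$ as in \eqref{INTEGRALI}, confirming that the spectrum of $\mathcal{L}_k$ is preserved. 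This reuses the computation already implicit in \eqref{DETc} of Theorem~\ref{cetvrta}, now with $c=0$.

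The core of the proof is establishing the intertwining relation $\mathcal{L}_{k+1}\mathcal{M}_k=\mathcal{M}_k\mathcal{L}_k$. I would substitute the dynamics \eqref{n01}, \eqref{n02} into the entries of $\mathcal{L}_{k+1}$, expressing $q_{k+1}$ and $p_{k+1}$ in terms of $q_k,p_k$ and the multipliers $\mu_k,\tilde\mu_k$, and then expand the four scalar equations coming from the matrix identity. The key algebraic inputs are the defining relations: the quadratic equation \eqref{kvadratna} with $c=0$ that determines $\mu_k$, the formula $\tilde\mu_k=\langle EJQ_k,q_k\rangle/\langle EQ_k,Q_k\rangle$, and the constraints $\langle q_k,q_k\rangle=0$, $\langle q_k,p_k\rangle=0$ on $T^*S^{n-1}_0$. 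A useful bookkeeping observation is that each entry of $\mathcal{L}_k(\lambda)$ is a sum of terms $\tau_i(\cdots)/(\lambda-U_i)$, so the matrix identity can be checked residue-by-residue at each pole $\lambda=U_i$; this reduces the verification to finitely many scalar relations among $q_k,p_k,\mu_k,\tilde\mu_k$ rather than a single polynomial identity in $\lambda$, which is where the asymmetry $\mu_k\neq\tilde\mu_k$ in the off-diagonal entries of $\mathcal{M}_k$ enters naturally.

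I would also keep track of the $\lambda$-dependence carefully: since $\mathcal{M}_k$ is linear in $\lambda$ (the entry $\mu_k\tilde\mu_k-\lambda$) while $\mathcal{L}_k$ has simple poles at the $U_i$, the product $\mathcal{M}_k\mathcal{L}_k$ has the same pole structure, and matching the residues together with the polynomial part at $\lambda\to\infty$ gives the full set of conditions. The behaviour at infinity should reproduce the leading-order consistency $\mathcal{L}_{k+1}^{\infty}\mathcal{M}_k^{\infty}=\mathcal{M}_k^{\infty}\mathcal{L}_k^{\infty}$ where the superscript denotes the coefficient of the top power of $\lambda$.

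The main obstacle I anticipate is the bookkeeping in verifying the residue identities, specifically showing that the cross-terms involving both $\mu_k$ and $\tilde\mu_k$ cancel correctly in the off-diagonal entries. Unlike the $c=\pm1$ case of Theorem~\ref{sedma}, here the two multipliers are genuinely distinct and are determined by structurally different equations ($\mu_k$ quadratic, $\tilde\mu_k$ rational), so the simplifications that rely on the symmetry $\lambda_k=\mu_k+\mu_{k-1}$ must be replaced by $\lambda_k=\mu_k+\tilde\mu_{k-1}$, and I would need the relation \eqref{neumann} with multipliers \eqref{lambda} to close the computation. The degeneracy of the induced metric on $S^{n-1}_0$ (every point is singular) means the constraint $\langle q,q\rangle=0$ must be used repeatedly to kill terms; the hard part will be ensuring that no division by the vanishing quantity $\langle q_k,q_k\rangle$ is hidden in the algebra, which is precisely why the function $\varphi_3=\langle p,q\rangle$ replaces $\varphi_2$ and why $\{\varphi_1,\varphi_3\}_D^0=2\langle Eq,q\rangle\neq0$ is the correct nondegeneracy condition to invoke.
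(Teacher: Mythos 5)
The paper itself gives no proof of Theorem \ref{deveta} (nor of Theorem \ref{sedma}); it is simply stated as the light--like analogue, so your direct verification of the intertwining relation $\mathcal L_{k+1}(\lambda)\mathcal M_k(\lambda)=\mathcal M_k(\lambda)\mathcal L_k(\lambda)$ is the intended route, and your plan does close. Three remarks. First, your opening paragraph (conservation of $\det\mathcal L_k$ and recovery of the integrals \eqref{INTEGRALI}) is a \emph{consequence} of the similarity \eqref{disr_Lax_N0}, not an ingredient of its proof, so it should not be presented as the first step. Second, the computation is tighter than you anticipate and needs neither \eqref{neumann} nor the relation $\lambda_k=\mu_k+\tilde\mu_{k-1}$: writing $q_{k+1}=J^{-1}(Ep_k+\mu_k q_k)$ and $Ep_{k+1}=-Jq_k+\tilde\mu_k q_{k+1}$ and using the two elementary identities
$\lambda\, Q_\lambda(J^{-1}x,J^{-1}y)=\langle J^{-2}x,y\rangle+Q_\lambda(x,y)$ and $Q_\lambda(Jx,Jy)=\lambda\, Q_\lambda(x,y)-\langle x,y\rangle$ (together with $Q_\lambda(Ex,Ey)=Q_\lambda(x,y)$), the only dynamical inputs are the quadratic equation \eqref{kvadratna} with $c=0$, which annihilates the term $\langle J^{-2}(Ep_k+\mu_k q_k),Ep_k+\mu_k q_k\rangle$ arising in $\lambda\,Q_\lambda(q_{k+1},q_{k+1})$, and the constraint $\langle q_k,q_k\rangle=0$, which kills the constant term in $Q_\lambda(Jq_k,Jq_k)$; the cross-terms in $\tilde\mu_k$ that you flag as the main obstacle in fact cancel identically in all four entries, so no special property of $\tilde\mu_k$ is used in the Lax equation itself (its explicit value only ensures $\langle q_{k+1},p_{k+1}\rangle=0$, i.e. that the image stays in $T^*S^{n-1}_0$). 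Third, a small correction: the denominators one must guard are $\langle J^{-2}q_k,q_k\rangle$ and $\langle Eq_{k+1},q_{k+1}\rangle$, not $\langle q_k,q_k\rangle$, which vanishes identically on the cone and is never divided by.
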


\begin{rem}{\rm
Obviously, the constraint $\varphi_3=0$ can be used for the Heisenberg
systems on $T^*S^{n-1}_{\pm 1}$ as well, but $\varphi_2=0$ is more
appropriate for a continuous Neumann system \eqref{c-neumann}. Namely,
the equation
\begin{equation*}\label{LT+}
p=\frac{\partial L}{\partial \dot q}+\lambda Eq=E\dot q+\lambda Eq
\end{equation*}
with constraints $\varphi_1=c$, $\varphi_3=0$ and \eqref{tS} implies $\lambda=-\langle E\dot q,q\rangle/\langle Eq,q\rangle$ and
\[
\dot q=Ep-\frac{\langle Ep,q\rangle}{\langle q,q\rangle}q.
\]

Thus, in this case, the Legendre transformation of $L(q,\dot q)$
yields the Hamiltonian function
\begin{equation*}\label{Ham+}
H(q,p)=\frac12\langle p,p\rangle -\frac12 \frac{\langle Ep,q\rangle}{\langle q,q\rangle}+\frac12\langle Uq,q\rangle,
\end{equation*}
having the extra term ${\langle Ep,q\rangle}/2{\langle q,q\rangle}$.
}\end{rem}

\subsection{Contact integrability}
The next statement is a cotangent variant of Theorems 2.1 and 3.3
given in \cite{Jov2}.

\begin{thm}\label{deseta}
{\rm (i)} The Heisenberg system \eqref{n01}, \eqref{n02} satisfies the invariant relation
$$
\langle E q_k,p_k\rangle +\langle E q_{k+1},p_{k+1}\rangle=0.
$$

{\rm (ii)}
The restriction of the correspondence
\eqref{n01}, \eqref{n02} to
the invariant manifold
$$
\Sigma_\kappa\subset \mathfrak M_0\colon  \qquad \varphi_2(q,p)=\langle E q,p\rangle=\pm\kappa, \qquad \kappa>0
$$
is a completely integrable
discrete contact system, with respect to the contact form $\theta=pdq\vert_{\Sigma_\kappa}$.
\end{thm}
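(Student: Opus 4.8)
The plan is to verify part (i) by a direct computation using the defining relations \eqref{n01}, \eqref{n02} and the pseudo-Euclidean symmetry of the metric, and then to assemble part (ii) from (i) together with the integrability result of Theorem \ref{osma}. For part (i), I would start from the two equations $q_{k+1}=EJ^{-1}p_k+\mu_k J^{-1}q_k$ and $p_{k+1}=-EJq_k+\tilde\mu_k Eq_{k+1}$ and compute $\langle Eq_{k+1},p_{k+1}\rangle$ by substituting the second equation: $\langle Eq_{k+1},-EJq_k+\tilde\mu_k Eq_{k+1}\rangle=-\langle Eq_{k+1},EJq_k\rangle+\tilde\mu_k\langle Eq_{k+1},Eq_{k+1}\rangle$. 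Using $E^2=\mathbf I$ and the symmetry of $EJ$ with respect to $\langle\cdot,\cdot\rangle$, the first term is $-\langle q_{k+1},Jq_k\rangle$; since $q_{k+1}\in S^{n-1}_0$ the second term vanishes, so $\langle Eq_{k+1},p_{k+1}\rangle=-\langle q_{k+1},Jq_k\rangle$. On the other hand, pairing \eqref{n01} against $Eq_k$ and again using $q_k\in S^{n-1}_0$ gives $\langle Eq_k,p_k\rangle=\langle q_k,Jq_k\rangle$ type terms that collapse to $\langle q_{k+1},Jq_k\rangle$ after multiplying by $J$ and using symmetry of $J$; the two expressions then cancel, yielding $\langle Eq_k,p_k\rangle+\langle Eq_{k+1},p_{k+1}\rangle=0$.

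I expect the main subtlety here to be bookkeeping of the various symmetric-operator identities: one must use that $E$, $J$, $EJ^{-1}$ and $J^{-2}$ are all self-adjoint for $\langle\cdot,\cdot\rangle$, that $E^2=\mathbf I$, and crucially that both $q_k$ and $q_{k+1}$ lie on the light-like cone so that $\langle q_k,q_k\rangle=\langle q_{k+1},q_{k+1}\rangle=0$. The definition $\tilde\mu_k=\langle EJQ,q\rangle/\langle EQ,Q\rangle$ should be invoked to confirm that the $\tilde\mu_k$-term is exactly the combination that, together with the explicit form of $\mu_k$, produces the antisymmetric cancellation. The cleanest route is probably to show directly that $\langle Eq_{k+1},p_{k+1}\rangle=-\langle Jq_k,q_{k+1}\rangle$ and $\langle Eq_k,p_k\rangle=+\langle Jq_k,q_{k+1}\rangle$ by symmetry of $J$, so the invariant relation is immediate; alternatively, one may observe that the quantity $\langle Eq_k,p_k\rangle$ already equals $\langle q_k,J^{-1}q_{k+1}\rangle$-type first integrals noted in the footnote following $P_c$, so its alternating sign under the correspondence follows structurally.

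For part (ii), the invariant relation from (i) shows that $\varphi_2=\langle Eq,p\rangle$ changes sign under one step of the correspondence, hence $\vert\varphi_2\vert$ is preserved; therefore each level set $\Sigma_\kappa=\{\varphi_2=\pm\kappa\}$ is invariant under \eqref{n01}, \eqref{n02}. I would then check that $\Sigma_\kappa$ is a contact manifold with contact form $\theta=pdq\vert_{\Sigma_\kappa}$, this being a hypersurface in $\mathfrak M_0\subset T^*S^{n-1}_0$ transverse to the Liouville flow, so that $\theta$ restricts to a contact form exactly because $\varphi_2\ne 0$ on $\Sigma_\kappa$. Finally, complete integrability follows by restricting the integrals \eqref{INTEGRALI} of Theorem \ref{osma} to $\Sigma_\kappa$: since $\sum_i f_i\equiv 0$ on the cone and the $f_i$ are in involution with respect to $\{\cdot,\cdot\}^0_D$, their common level sets intersect $\Sigma_\kappa$ in the invariant tori (Legendrian/coisotropic leaves) appropriate to the contact setting. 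The hardest step will be the contact-integrability verification in (ii): one must confirm that the restriction of the symplectic invariants to the non-symplectic hypersurface $\Sigma_\kappa$ yields the correct count of functionally independent integrals compatible with the Reeb dynamics of $\theta$, which is precisely the content referenced as Theorems 2.1 and 3.3 of \cite{Jov2}, so I would lean on that cited framework rather than reprove the contact-integrability criterion from scratch.
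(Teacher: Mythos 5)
Part (i) of your plan is correct and is essentially the paper's computation: both arguments reduce the claim to $\langle Eq_k,p_k\rangle=\langle Jq_k,q_{k+1}\rangle$ and $\langle Eq_{k+1},p_{k+1}\rangle=-\langle Jq_k,q_{k+1}\rangle$, using only $E^2=\mathbf I$, self-adjointness of the diagonal matrices, and $\langle q_k,q_k\rangle=\langle q_{k+1},q_{k+1}\rangle=0$; the explicit values of $\mu_k,\tilde\mu_k$ are not needed.

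Part (ii), however, has a genuine gap: you never verify that the restricted correspondence is a contact transformation of $(\Sigma_\kappa,\theta)$, i.e.\ that $(\Psi\vert_{\Sigma_\kappa})^*\theta=\theta$. Invariance of $\Sigma_\kappa$ (from (i)), the contact condition $\theta(X)=\varphi_2=\pm\kappa\ne 0$ with $X$ the Dirac--Hamiltonian field of $\varphi_2$, and a supply of commuting integrals do not by themselves make the restriction a completely integrable \emph{discrete contact} system; preservation of the contact form by the map is part of the definition and must be proved. The paper's mechanism is the observation that the generating function $S=\langle q,JQ\rangle$ coincides with $\varphi_2$ on the graph $\Gamma_\kappa$ of the correspondence (because $\langle q,q\rangle=0$), hence $S\equiv\pm\kappa$ and $dS\vert_{\Gamma_\kappa}=0$, which gives
\begin{equation*}
PdQ-pdq=-dS+\tfrac{\tilde\mu}{2}\,d\langle Q,Q\rangle+\tfrac{\mu}{2}\,d\langle q,q\rangle=0\quad\text{on }\Gamma_\kappa .
\end{equation*}
This step is absent from your plan and does not follow from part (i) or from Theorem \ref{osma}. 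Two smaller omissions: the count of independent integrals on $\Sigma_\kappa$ is $n-2$ because there are \emph{two} linear relations, $\sum_i f_i=0$ and $\sum_i J_i^2 f_i=-\kappa^2$ (you cite only the first); and the passage from $\{f_i,f_j\}^0_D=0$ on $\mathfrak M_0$ to commutativity of the restrictions $f_i\vert_{\Sigma_\kappa}$ with respect to the Jacobi bracket of $(\Sigma_\kappa,\theta)$ requires $\{\varphi_2,f_i\}^0_D=0$ together with the isoenergetic integrability theorem of \cite{JJ2}, not merely intersecting level sets with the hypersurface.
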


\begin{proof} The statement follows from Theorems 2.1, 3.3 of \cite{Jov2} and Lemma \ref{komutativna}. For the completeness
of the exposition, we present a direct proof in the Appendix. \end{proof}

\section{Geometric interpretation of the integrals}

\begin{thm}\label{jedanaesta}
{\rm (i)} If a sequence of planes
\begin{equation}\label{PLANES}
\pi_j=\{ s_1 Ep_j+s_2 q_j\,\vert\, s_1,s_2\in\R\}, \qquad
j\in\mathbb Z
\end{equation}
determined by a trajectory $\{(q_j,p_j)\,\vert\,j\in\mathbb Z\}$
of the Heisenberg model \eqref{n01}, \eqref{n02} is tangent to a
cone $\mathcal Q_{0,\lambda^*}$ from the pseudo--confocal family
\eqref{confocal} for a certain $j$,  then it is tangent to
$\mathcal Q_{0,\lambda^*}$ for all $j\in\mathbb Z$.

{\rm (ii)} If a sequence of lines
\begin{equation}\label{LINES}
l_j=\{Ep_j+s q_j\,\vert,s\in\R\}, \qquad j\in\mathbb Z
\end{equation}
determined by a trajectory $\{(q_j,p_j)\,\vert\,j\in\mathbb Z\}$
of the Heisenberg model \eqref{n1}, \eqref{n2} is tangent to a
quadric $\mathcal Q_{c,\lambda^*}$ from the pseudo--confocal
family \eqref{confocal} for a certain $j$, then it is tangent to
$\mathcal Q_{c,\lambda^*}$ for all $j\in\mathbb Z$.
\end{thm}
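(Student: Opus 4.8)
The plan is to reduce each tangency condition to the vanishing of a single conserved quantity, namely $\det\mathcal L_{q_j,p_j}(\lambda^*)$, and then to invoke conservation of the integrals $f_i$. First I would write the members of the pseudo--confocal family as zero sets of $G_\lambda(x)=\langle(U-\lambda\mathbf I)^{-1}x,x\rangle-c$ and record the bookkeeping identities $\langle(U-\lambda\mathbf I)^{-1}x,y\rangle=-Q_\lambda(x,y)$, together with $Q_\lambda(Ep,Ep)=Q_\lambda(p,p)$ and $Q_\lambda(Ep,q)=Q_\lambda(q,Ep)$, which follow at once from $\tau_i^2=1$ and $(Ep)_i=\tau_i p_i$. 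These turn every geometric tangency condition into an expression in the entries of the $2\times2$ matrix $\mathcal L_{q,p}(\lambda)$, taken throughout with $U_i=J_i^2$ as in \eqref{confocal}.

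For part {\rm (ii)} I would substitute the affine parametrization $x(s)=Ep_j+s\,q_j$ of the line $l_j$ into $G_{\lambda^*}$, obtaining a quadratic $\gamma s^2+2\beta s+\alpha$ with $\alpha=-Q_{\lambda^*}(p_j,p_j)-c$, $\beta=-Q_{\lambda^*}(q_j,Ep_j)$, $\gamma=-Q_{\lambda^*}(q_j,q_j)$. Tangency of $l_j$ to $\mathcal Q_{c,\lambda^*}$ is exactly the double--root condition $\beta^2=\alpha\gamma$, which rearranges into $Q_{\lambda^*}(q_j,Ep_j)^2-Q_{\lambda^*}(q_j,q_j)\bigl(c+Q_{\lambda^*}(p_j,p_j)\bigr)=0$, i.e. $\det\mathcal L_{q_j,p_j}(\lambda^*)=0$ in the notation of Theorem \ref{sedma}. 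For part {\rm (i)} I would instead restrict the cone's quadratic form to the $2$--plane $\pi_j=\Span\{Ep_j,q_j\}$ via $x=s_1Ep_j+s_2q_j$; since $\pi_j$ and the cone both pass through the origin, tangency means the restricted binary quadratic form degenerates (rank drops to one, so $\pi_j$ meets $\mathcal Q_{0,\lambda^*}$ in a double ruling), which is again the discriminant condition $Q_{\lambda^*}(q_j,Ep_j)^2-Q_{\lambda^*}(q_j,q_j)Q_{\lambda^*}(p_j,p_j)=0$, i.e. $\det\mathcal L_{q_j,p_j}(\lambda^*)=0$ for the light--like $L$--matrix of Theorem \ref{deveta}.

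Having identified both tangency conditions with $\det\mathcal L_{q_j,p_j}(\lambda^*)=0$, I would finish by observing that this determinant is independent of $j$. This follows in two equivalent ways: directly from the discrete Lax equations \eqref{disr_Lax_N}, \eqref{disr_Lax_N0}, since conjugation by $\mathcal M_k(\lambda)$ preserves the determinant; or from the partial--fraction identity \eqref{DETc}, which exhibits $\det\mathcal L_{q,p}(\lambda)=\sum_i f_i(q,p)/(\lambda-J_i^2)$ as the generating function of the first integrals $f_i$ of \eqref{int-neumann2} (resp. \eqref{INTEGRALI}), conserved by Theorem \ref{sesta} (resp. Theorem \ref{osma}). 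Either way $\det\mathcal L_{q_j,p_j}(\lambda^*)$ does not depend on $j$, so its vanishing for one index forces it for all, proving tangency for every $j\in\mathbb Z$.

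The routine part is the two discriminant computations. The one point requiring genuine care is the correct notion of tangency in part {\rm (i)}: because $\pi_j$ and the cone are both homogeneous, "tangent" must mean a repeated linear factor of the restricted form rather than the affine double--contact used in part {\rm (ii)}. Thus the main obstacle is purely bookkeeping---matching the signs and the $E$--twist so that each discriminant lands exactly on $-\det\mathcal L_{q_j,p_j}(\lambda^*)$---and once the identities of the first paragraph are in place it becomes mechanical.
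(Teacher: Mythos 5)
Your proposal is correct and follows essentially the same route as the paper: both reduce tangency to the vanishing of $\det\mathcal L_j(\lambda^*)$ and conclude by its conservation (via the Lax conjugation or the partial--fraction expansion into the integrals $f_i$). The only cosmetic differences are that in (i) you derive the binary discriminant condition by direct restriction of the quadratic form to $\pi_j$ where the paper cites Fedorov's Pl\"ucker--coordinate criterion (the two coincide for $r=2$), and in (ii) you compute the affine double--root condition directly where the paper homogenizes to a cone in $\R^{n+1}$ and slices by $x_0=1$.
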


\begin{proof} (i)
Let $\pi_I$, $I=(i_1,\dots, i_r)$, $1\le
i_1<i_2<\dots<i_r\le n$ be the Pl\"ucker coordinates of a
$r$--dimensional subspace $\pi$ in ${\mathbb R}^{n}$.
Then  $\pi$ is tangent to a nondegenerate cone $\mathcal K=\{\sum_i  b_i x_i^2 =0\}$  if and only if
$
\sum_I b_{i_1}\cdots
b_{i_k} \,\pi_I^2 =0
$ (see
Fedorov \cite{fe}).
For $r=2$, $\pi=\Span\{x,y\}$ the condition reduces to
\begin{equation}\label{f-uslov}
\big(\sum_i b_i x_i^2\big)\big(\sum_i b_iy_i^2\big)-\big(\sum_i b_i x_iy_i\big)^2=0.
\end{equation}

Thus, by taking $b_i=\tau_i/(U_i-\lambda^*)$, we get that
$\pi_j=\Span\{Ep_j,q_j\}$ is tangent to $\mathcal Q_{0,\lambda^*}$
if and only if
\begin{equation}\label{DET}
Q_{\lambda^*}(q_j,q_j)Q_{\lambda^*}(Ep_j,Ep_j)-Q_{\lambda^*}(q_j,Ep_j)^2=0.
\end{equation}
On the other hand, from Theorem \ref{deveta} we have that
\eqref{DET} is the integral of the system equal $\det\mathcal
L_j(\lambda^*)$. Therefore, if $\pi_j$ is tangent to $\mathcal
Q_{0,\lambda^*}$, it is tangent to $\mathcal Q_{0,\lambda^*}$ for
all $j\in\mathbb Z$.

\

(ii) For $c=\pm 1$, we consider $(n+1)$--dimensional space
$\R^{n+1}(x_0,x_1,\dots,x_n)$. The plane
$\tilde\pi_j=\Span\{(0,q_j),(1,Ep_j)\}$ is tangent to the cone
$$
\mathcal K_{c,\lambda^*}\colon \quad c x_0^2+\frac{\tau_1 x_1^2}{\lambda^*-U_1}+\dots+\frac{\tau_n x_n^2}{\lambda^*-U_n}=0
$$
if and only if
$$
\det\mathcal
L_j(\lambda^*)=Q_{\lambda^*}(q_j,q_j)(c+Q_{\lambda^*}(Ep_j,Ep_j))-Q_{\lambda^*}(q_j,Ep_j)^2=0.
$$
Here $\mathcal L_j(\lambda)$ is given by Theorem \ref{sedma} with
$\epsilon=\infty$. Thus, as in item (i), if $\tilde\pi_j$ is
tangent to $\mathcal K_{c,\lambda^*}$, it is tangent to $\mathcal
K_{c,\lambda^*}$ for all $j\in\mathbb Z$. Now, the statement
follows from the identities
\begin{align*}
& \mathcal Q_{c,\lambda^*} \cong \mathcal K_{c,\lambda^*}\cap\{(x_0,x_1,\dots,x_n)\in\R^{n+1}\,\vert\, x_0=1\}, \\
& l_j=Ep_j+\{ s q_j\,\vert\, s\in\R\}\cong
\tilde\pi_j\cap\{(x_0,x_1,\dots,x_n)\in\R^{n+1}\,\vert\,x_0=1\}.\qedhere
\end{align*}
\end{proof}

Obviously, item (ii) holds for the continious Neumann system
\eqref{ham1}, \eqref{ham2} as well, by replacing
$\{(q_j,p_j)\,\vert\,j\in\mathbb Z\}$ by a trajectory
$\{(q(t),p(t))\,\vert\,t\in\R\}$. For the Euclidean case it is
proved by Moser (e.g., see \cite{Mum}). The above proof is taken
from \cite{FJ}, where it is given for the Neumann systems on
Stiefel varieties.

Let us assume
$$
U_1<U_2<\dots< U_n.
$$

In the case of the Euclidean space ($k=n$), it is well known that outside coordinates hypeplanes
through $q\in\mathbb E^{n}$ it pass exactly $n$, i.e., $n-1$ quadrics from the confocal family \eqref{confocal}, for $c=\pm 1$ and $c=0$, respectively. They define ellipsoidal coordinates, i.e., together with $r=\sqrt{\langle q,q\rangle}$ so called sphero--conical  coordinates in $\mathbb E^{n}$. Suppose $0<k<n$.

\begin{thm}\label{dvanaesta}  For $c=1$ through a generic point in $\mathbb E^{k,l}$ pass $n$  quadrics from the pseudo-confocal family \eqref{confocal},
for $c=-1$ pass $n$ or $n-2$, and for $c=0$ exactly $n-1$
quadrics.
\end{thm}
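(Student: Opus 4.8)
The plan is to reduce the geometric counting problem to counting the real roots of a single polynomial, and then to locate those roots by a sign analysis. Fix a point $x\in\mathbb E^{k,l}$ and set $g(\lambda)=\sum_{i=1}^n \tau_i x_i^2/(U_i-\lambda)$, so that $x\in\mathcal Q_{c,\lambda}$ precisely when $g(\lambda)=c$ and $\lambda\ne U_i$. Clearing denominators with $P(\lambda)=\prod_{i=1}^n(U_i-\lambda)$, this is equivalent to $F(\lambda)=0$, where
\[
F(\lambda)=\sum_{i=1}^n\tau_i x_i^2\prod_{j\ne i}(U_j-\lambda)-c\,P(\lambda).
\]
First I would record the degree and leading coefficient of $F$: for $c=\pm1$ the term $-cP$ dominates and $\deg F=n$, while for $c=0$ one has $F(\lambda)=\sum_i\tau_i x_i^2\prod_{j\ne i}(U_j-\lambda)$ with leading coefficient $(-1)^{n-1}\langle x,x\rangle$, so $\deg F=n-1$ for a generic point (where $\langle x,x\rangle\ne0$). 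This already gives the upper bounds $n$ and $n-1$ on the number of quadrics.

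The heart of the argument is a sign computation at the poles. Evaluating at $U_m$ kills every term except the $i=m$ one, giving $F(U_m)=\tau_m x_m^2\prod_{j\ne m}(U_j-U_m)$, so under $U_1<\dots<U_n$ and $x_m\ne0$ one finds $\mathrm{sign}\,F(U_m)=\tau_m(-1)^{m-1}$. The key observation is that this sign alternates for $m=1,\dots,k$ and again for $m=k+1,\dots,n$, but does \emph{not} change across the gap between $U_k$ and $U_{k+1}$, where the flip of $\tau$ exactly compensates the change of parity. Hence there are exactly $n-2$ sign changes among $F(U_1),\dots,F(U_n)$, each forcing a root in the corresponding interval $(U_m,U_{m+1})$. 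It then remains to analyse the two unbounded intervals $(-\infty,U_1)$ and $(U_n,+\infty)$ by comparing $\mathrm{sign}\,F(U_1)$ and $\mathrm{sign}\,F(U_n)$ with the signs of $F$ at $-\infty$ and $+\infty$ respectively, which are read off from the leading term.

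Carrying out this boundary comparison case by case finishes the proof. For $c=1$ the leading term $-P$ produces a sign change at both ends, giving $n$ sign changes in total; matched against $\deg F=n$ this forces exactly $n$ real simple roots, hence $n$ quadrics. For $c=-1$ the leading term $+P$ produces no sign change at either end, leaving the guaranteed $n-2$ real roots together with a remaining pair which, by the parity of $\deg F=n$, is either real or complex conjugate, giving $n$ or $n-2$. For $c=0$ the signs at $\pm\infty$ depend on $\mathrm{sign}\,\langle x,x\rangle$, but a short check shows that exactly one of the two unbounded intervals contributes a sign change in either case, yielding $n-1$ sign changes and, against $\deg F=n-1$, exactly $n-1$ quadrics. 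The main obstacle I anticipate is purely bookkeeping: getting the parities right in $\mathrm{sign}\,F(U_m)=\tau_m(-1)^{m-1}$ and matching them correctly to the leading-coefficient signs at infinity, in particular tracking how the single missing sign change at the $(k,k+1)$ gap interacts with the boundary behaviour to reproduce the three different counts. The genericity hypotheses used are exactly $x_i\ne0$ for all $i$ (and $\langle x,x\rangle\ne0$ when $c=0$), both open and dense.
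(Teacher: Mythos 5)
Your proposal is correct and follows essentially the same route as the paper: both arguments count one root in each of the $n-2$ bounded intervals $(U_m,U_{m+1})$ with $m\ne k$ (the signature gap being exactly where the sign change fails) and then settle the two unbounded intervals from the leading behaviour at $\pm\infty$, which for $c=0$ depends on $\mathrm{sign}\,\langle x,x\rangle$. The only cosmetic difference is that you clear denominators and read signs of the polynomial $F$ at the $U_m$, whereas the paper works with the rational function $c+Q_\lambda(q,q)$ and its one-sided limits at the poles; these encode the same information.
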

\begin{proof}
For $c=1$ and $A=EU$, the confocal family \eqref{confocal} corresponds to the confocal family studied in \cite{JJ3}.
Consider the confocal family \eqref{confocal} written in the
form
\begin{equation}\label{confeq}
c+Q_{\lambda}(q,q)=c+\sum_{i=1}^k\frac{q_i^2}{\lambda-U_i}-\sum_{i=k+1}^n\frac{q_i^2}{\lambda-U_i}=\frac{c\lambda^{n}+\cdots}{(\lambda-U_1)\dots\cdot(\lambda-U_n)}=0.
\end{equation}

From $\lim_{\lambda\rightarrow
U_i\pm}{q_i^2}/(\lambda-U_i)=\pm\infty$ we see that there exist at
least $n-2$ solutions
\[
\zeta_1\in(U_1,U_2),\dots,
\zeta_{k-1}\in(U_{k-1}, U_k), \zeta_{k+1}\in(U_{k+1},
U_{k+2}),\dots, \zeta_{n-1}\in(U_{n-1}, U_n)
\]
of \eqref{confeq}
outside coordinates hyperplanes.

Next, from
\begin{equation}\label{Qas}
Q_{\lambda}(q,q)=\frac{\lambda^{n-1}\langle
q,q\rangle+\cdots}{(\lambda-U_1)\dots\cdot(\lambda-U_n)} \sim \frac{\langle
q,q\rangle}{\lambda}, \qquad \lambda\rightarrow\pm\infty,
\end{equation}
and
\[
,\quad \lim_{\lambda\rightarrow U_1-}{q_1^2}/(\lambda-U_1)=-\infty,
\quad \lim_{\lambda\rightarrow U_n+}{-q_n^2}/(\lambda-U_n)=-\infty,
\]
it follows that in the case $c=1$ there are two additional
solutions $\zeta_0\in(-\infty, U_1)$ and $\zeta_n\in(U_n,\infty)$.

Further, for $c=0$ and $\langle q,q\rangle<0$, from \eqref{Qas}, we have
a solution $\zeta_0$ within $(-\infty, U_1)$, while for
$\langle q,q\rangle>0$ we have a solution $\zeta_n\in (U_n,\infty)$.
\end{proof}
\begin{thm} \label{trinaesta}
For $c=1$ and a generic trajectory $\{(q_j,p_j)\,\vert\,j\in\mathbb Z\}$, the sequence of lines \eqref{LINES} is
tangent to $n-1$ quadrics from the pseudo--confocal family \eqref{confocal}, while
for $c=0$, the sequence of planes \eqref{PLANES} is tangent to  $n-2$ cones $\mathcal Q_{0,\lambda}$.
For $c=-1$ and a generic trajectory $\{(q_j,p_j)\,\vert\,j\in\mathbb Z\}$, the sequence of lines \eqref{LINES} is, depending on the initial position, tangent to $n-1$ or $n-3$ quadrics.
\end{thm}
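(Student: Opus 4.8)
The plan is to turn the tangency condition into a question about the real zeros of the spectral determinant, and to count them by anchoring the sign analysis at the zeros of $Q_\lambda(q,q)$. By Theorem~\ref{jedanaesta}, the line $l_j$ in \eqref{LINES} (for $c=\pm1$), respectively the plane $\pi_j$ in \eqref{PLANES} (for $c=0$), is tangent to $\mathcal Q_{c,\lambda^*}$ exactly when $g(\lambda^*)=0$, where $g(\lambda)=\det\mathcal L_j(\lambda)$ is the $2\times2$ determinant \eqref{DETc} formed from the matrices of Theorems~\ref{sedma} (with $\epsilon=\infty$) and~\ref{deveta}. Since $g(\lambda)=\sum_i f_i/(\lambda-U_i)$ is a first integral, the number of real roots $\lambda^*$ is independent of $j$, so it suffices to count the real zeros of $g$ for a single generic $(q,p)=(q_j,p_j)$. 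Writing $g=P/D$ with $D=\prod_i(\lambda-U_i)$, the top coefficient of $P$ equals $\sum_i f_i=c^2$, giving $\deg P=n-1$ for $c=\pm1$; for $c=0$ this vanishes and the next coefficient is $\sum_i f_iU_i=-\langle q,Ep\rangle^2\neq0$, so $\deg P=n-2$. This caps the number of real tangencies from above.

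The key remark is that, putting $A(\lambda)=Q_\lambda(q,q)$, $B(\lambda)=Q_\lambda(q,Ep)$ and $C(\lambda)=c+Q_\lambda(p,p)$, one has $g=AC-B^2$, whence $g(\lambda_0)=-B(\lambda_0)^2<0$ at every zero $\lambda_0$ of $A$ (generically $B(\lambda_0)\neq0$). The zeros of $A=Q_\lambda(q,q)$ are located exactly as in the proof of Theorem~\ref{dvanaesta}, with the point replaced by $q$ and $\langle q,q\rangle=c$: there is one simple zero in every interval $(U_i,U_{i+1})$ interior to one of the two signature blocks, none in the cross–block interval $(U_k,U_{k+1})$, and one further zero in $(U_n,\infty)$ if $c=1$, in $(-\infty,U_1)$ if $c=-1$, and no further zero if $c=0$.

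Next I would use the following sign device, which sidesteps the fact that the residues $f_i$ have no a priori sign. If $\lambda_0<\lambda_1$ are consecutive zeros of $A$ separated by a single pole $U_i$, then $g(\lambda_0),g(\lambda_1)<0$ while $g\to+\infty$ on one side of $U_i$ and $g\to-\infty$ on the other; whatever the sign of $f_i$, exactly one of the subintervals $(\lambda_0,U_i)$, $(U_i,\lambda_1)$ carries a genuine sign change, hence a real zero of $g$. The same holds on a half–line whose finite endpoint is an extreme zero of $A$, which contains a single pole, and on which $g\to0^-$ at infinity; and a pole–free interval with endpoints of opposite sign (e.g.\ $(\alpha,\infty)$ with $g(\alpha)<0$ and $g\to0^+$) likewise carries one forced zero. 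Here I use $g\sim c^2/\lambda$ (so $g\to0^-$ at $-\infty$ and $0^+$ at $+\infty$ for $c=\pm1$) and $g\sim-\langle q,Ep\rangle^2/\lambda^2\to0^-$ at both ends for $c=0$.

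Carrying out the bookkeeping: for $c=1$ the two unbounded ends give one forced zero each, and the $n-3$ bounded single–pole gaps between neighbouring zeros of $A$ give one more apiece, for $n-1$ forced zeros; the remaining cross–block gap is ambiguous, but the cap $\deg P=n-1$ forces it to contribute $0$, so there are exactly $n-1$ tangent quadrics. For $c=0$ the two half–lines again each contain a pole adjacent to an extreme zero of $A$ with $g\to0^-$ at infinity, and the same count gives $n-2$ forced zeros, the cross–block gap contributing $0$ by $\deg P=n-2$, i.e.\ exactly $n-2$ cones. For $c=-1$ the extra zero of $A$ lies in $(-\infty,U_1)$ instead of $(U_n,\infty)$; now only the $n-3$ bounded single–pole gaps are forced, while both unbounded ends (one pole–free, the other with outer value $g\to0^+$) and the cross–block gap are ambiguous, each contributing an even number of zeros, so the guaranteed count is $n-3$ and $\deg P=n-1$ permits at most two extra, giving $n-1$ or $n-3$ according to the initial position. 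The main obstacle is precisely the indeterminacy of the signs of $f_i$; the device of evaluating $g$ at the zeros of $A$, where $g=-B^2<0$, combined with the exact degree of $P$, is what rigidifies the count. Genericity is invoked to ensure that the zeros of $A$ are simple, that $B$ does not vanish there, and that the residues $f_i$ and $\langle q,Ep\rangle$ are nonzero.
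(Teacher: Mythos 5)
Your proposal is correct and follows essentially the same route as the paper: evaluate $\det\mathcal L_j(\lambda)=Q_\lambda(q,q)\bigl(c+Q_\lambda(p,p)\bigr)-Q_\lambda(q,Ep)^2$ at the zeros of $Q_\lambda(q,q)$, where it equals $-Q_\lambda(q,Ep)^2<0$, force a root across each single pole flanked by such points, and close the count with the asymptotics at infinity and the degree of $P_c$. The only (harmless) deviations are that for $c=1$ you extract the final root from the unbounded interval containing $U_1$ by the same sign device, where the paper instead invokes the parity of the complex roots of $P_1$, and for $c=-1$ your per-region evenness argument makes the ``$n-1$ or $n-3$'' dichotomy explicit, which the paper leaves to the same implicit parity consideration.
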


\begin{proof}
According to Theorem \ref{jedanaesta},
we need to estimate the number of the real zeros of the equation $\mathcal L_j(\lambda)=0$.
To simplify the notation, in what follows we will omit the index $j$ and use $p$, $q$, $\mathcal L(\lambda)$, instead of $p_j$, $q_j$, and $\mathcal L_j(\lambda)$.

Recall the equation \eqref{DETc} and rewrite it as
\begin{equation}\label{determ}
Q_\lambda(q,q)(c+Q_\lambda(Ep,Ep))-Q_\lambda(q,Ep)^2=
\sum_{i=1}^n\frac{f_i(q,p)}{\lambda-U_i}=\frac{P_c(\lambda)}{\prod_i(\lambda-U_i)},
\end{equation}
where $f_i$ are the integrals \eqref{int-c} and $P_c(\lambda)$ is a polynomial of degree $n-1$ for $c=\pm 1$ and $n-2$ for
$c=0$.
Thus, the maximal number of quadrics $\mathcal Q_{c,\lambda}$ is $n-1$ (for $c=\pm 1$), i.e, $n-2$ (for $c=0$).
Due to relations
\[
f_1+\dots+f_n=c^2, \quad
U_1 f_1+\cdots+ U_n f_n =-\langle Ep,q\rangle^2 \quad (\text{for  } c=0),
\]
the leading terms of polynomials $P_c(\lambda)$ are given by
\begin{equation}\label{EXP}
P_{\pm 1}(\lambda)=\lambda^{n-1} + \cdots, \qquad P_0(\lambda)=-\langle Ep,q\rangle^2 \lambda^{n-2}+\cdots.
\end{equation}

Firstly,  let us assume $c=\langle
q,q\rangle=-1$, $q_1\dots q_n\ne 0$. As in the proof of Theorem \ref{dvanaesta}, there are
$n-1$ solutions
\[
\zeta_0\in(-\infty, U_1),\dots,\zeta_{k-1}\in(U_{k-1}, U_k), \zeta_{k+1}\in(U_{k+1}, U_{k+2}),\dots, \zeta_{n-1}\in(U_{n-1}, U_n)
\]
of the equation $Q_{\lambda}(q,q)=0$.

The left hand side of \eqref{determ} is negative at the ends of all $n-3$ intervals
\begin{equation}\label{interval}
(\zeta_0,\zeta_1), (\zeta_1,\zeta_2),\dots, (\zeta_{k-2}, \zeta_{k-1}), (\zeta_{k+1}, \zeta_{k+2}),\dots, (\zeta_{n-2}, \zeta_{n-1}),
\end{equation}
which contain $U_1, U_2,\dots, U_{k-1}, U_{k+2},\dots, U_{n-1}$, respectively. Owing to
\begin{equation}\label{limF}
\lim_{\lambda\rightarrow U_i\pm}\frac{f_i}{\lambda-U_i}=(\pm \mathrm{sgn} f_i)\cdot\infty,
\end{equation}
we see that each interval in \eqref{interval} contains a solution of $\det \mathcal
L(\lambda)=0$.

In the case $c=\langle q,q\rangle=1$, with $U_1$ and $(\zeta_0,\zeta_1)$ replaced by $U_n$ and $(\zeta_{n-1}, \zeta_{n})$, we get the existence of $n-3$ solutions of $\det \mathcal
L(\lambda)=0$.
On the other side, from \eqref{EXP} we get the asymptotic expansion
\[
\sum_{i=1}^n\frac{f_i(q,p)}{\lambda-U_i} \sim  \frac{1}{\lambda}, \qquad \lambda\rightarrow\pm\infty,
\]
leading to a solution  within $(\zeta_n,\infty)$ as well.
Since the polynomial $P_{1}(\lambda)$ has
real coefficients, degree $n-1$, and $n-2$ real zeros (none of the given zeros is of multiplicity greater then 1), it has an additional real zero.

For the case $c=\langle q,q\rangle=0$ we proceed analogously.
As in the proof of Theorem \ref{dvanaesta}, there are
$n-2$ solutions
\[
\zeta_1\in(U_1, U_2),\dots,\zeta_{k-1}\in(U_{k-1}, U_k), \zeta_{k+1}\in(U_{k+1}, U_{k+2}),\dots, \zeta_{n-1}\in(U_{n-1}, U_n)
\]
of the equation $Q_{\lambda}(q,q)=0$.
The left hand side of \eqref{determ} is negative at the ends of all $n-4$ intervals
\begin{equation}\label{interval**}
(\zeta_1,\zeta_2),\dots, (\zeta_{k-2}, \zeta_{k-1}), (\zeta_{k+1}, \zeta_{k+2}),\dots, (\zeta_{n-2}, \zeta_{n-1}),
\end{equation}
that contain $U_2, U_3,\dots, U_{k-1}, U_{k+2},\dots, U_{n-1}$. From \eqref{limF},
we obtain that each interval in \eqref{interval**} contains a solution of $\det \mathcal
L(\lambda)=0$. Moreover, due to \eqref{EXP}, we have the asymptotic expansion
\[
\sum_{i=1}^n\frac{f_i(q,p)}{\lambda-U_i}\sim
-\frac{\langle Ep,q\rangle^2}{\lambda^2}, \qquad \lambda\rightarrow\pm\infty
\]
implying that there exist $\zeta_0< U_1$ and $\zeta_n>U_n$, such that the left hand side of \eqref{determ}
is less then zero.

Therefore, the equation $\det \mathcal
L(\lambda)=0$ has $n-2$ real solutions.
\end{proof}

\begin{rem}{\rm
The signatures $(1,n-1)$ and $(n-1,1)$ should be treated separately,
however for $c=1$, $c=0$, and $c=-1$ and the signature $(1,n-1)$ the conclusions are the same.
Suppose $c=-1$ and $k=n-1$.
Now the left hand side of \eqref{determ} is negative at the ends of intervals
\begin{equation*}
(\zeta_0,\zeta_1), (\zeta_1,\zeta_2),\dots, (\zeta_{n-3}, \zeta_{n-2})
\end{equation*}
that contain $U_1, U_2,\dots, U_{n-3}, U_{n-2}$,
so we get $n-2$ real solutions of
$\det \mathcal L(\lambda)=0$.
Again, since  $P_{-1}(\lambda)$ has
$n-2$ real zeros, it has the additional real zero: the sequence of lines \eqref{LINES} is
tangent to $n-1$ quadrics $\mathcal Q_{-1,\lambda}$ for a generic initial conditions.
}\end{rem}

\begin{rem}{\rm
If we assume $c=-1$ and that the value of the integal $f_k$ is less then zero or the value of $f_{k+1}$ is greater then zero, then  the sequence of lines \eqref{LINES} is
tangent to $n-1$ quadrics $\mathcal Q_{-1,\lambda}$.
Indeed, then, from \eqref{limF}, there exists
$\zeta_{k}\in (U_k,U_{k+1})$ with $\det \mathcal
L(\zeta_{k})<0$. Since
\[
U_k\in (\zeta_{k-1}, \zeta_k) \quad  \text{and} \quad  U_{k+1}\in (\zeta_k, \zeta_{k+1})  \quad  (\text{for  } k<n-1),
\]
there exist two additional real solutions of $\det \mathcal L(\lambda)=0$.
}\end{rem}

\begin{rem}
{\rm
Note that one can consider a symmetric Heisenberg model, i.e., Neumann system on $S^{n-1}_c$ as well, when some of $U_i$ are mutually equal:
\[
\underbrace{U_1=\cdots=U_{\rho_1}}_{\rho_1} <
\underbrace{U_{\rho_1+1}=\cdots U_{\rho_1+\rho_2}}_{\rho_2}<\cdots<
\underbrace{U_{n-\rho_r+1}=\cdots=U_{n}}_{\rho_r},
\]
$\rho_1+\cdots+\rho_r=n$. Then the set of all symmetries
$
\mathbf R\in O(k,l)\colon \Ad_\mathbf R(U)=U
$
is either
$
O(\rho_1)\times\cdots\times O(\rho_r),
$
or
\[
O(\rho_1)\times\cdots O(\rho_{p-1})\times O(k_p,l_p)\times O(\rho_{p+1})\times \cdots\times O(\rho_r), \quad k_p+l_p=\rho_p.
\]

Similarly like in the case of virtual billiard dynamics \cite{JJ3}, the systems are integrable in a noncommutative sense and the phase spaces $\mathfrak M_c$, $c=\pm 1, 0$ are foliated on invariant $(N-1)$--dimensional isotropic varieties, where
\[
N=r+\sharp\{s\,\vert\, \rho_s>1, \, s=1,\dots,r\}.
\]

Further, some additional careful analysis is
needed in order to estimate the number of real caustics and their maximal number is $N-1$ for $c=\pm 1$ and $N-2$ for $c=0$.
}
\end{rem}

\section{Appendix}

\begin{proof}[Proof of Lemma \ref{karakterizacija}]
Let
$
S(x,X)=\langle B x,X\rangle,
$
where $B$ is a nonsingular matrix. The equations \eqref{GF2} become
\begin{align}
y=B^T X+\lambda \A x, \qquad Y=-Bx-\Lambda\A X,\label{1nova}
\end{align}
where $\langle \A x,x\rangle=1$, $\langle \A X,X\rangle=1$.

From the constraints $\langle y,y\rangle=1$, $\langle Y,Y\rangle=1$, we get that  $\lambda$ and $\Lambda$ are
solutions of the equations
\begin{align}
\label{D1} 1&=\langle B^T X,B^T X\rangle +2\lambda \langle\A x,B^T X\rangle+\lambda^2 \langle A^{-2}x,x\rangle,\\
\label{D2} 1&=\langle B x, B x\rangle+2\Lambda \langle \A X,B x\rangle+\Lambda^2 \langle A^{-2}X,X\rangle.
\end{align}

One can easily see that if
$$
\max_{\xi\in\mathbb Q^{n-1}} \vert B^T \xi\vert=\max_{\xi\in \mathbb S^{n-1}}\vert B^T A^{1/2} \xi\vert >1,
\qquad
\max_{\xi\in\mathbb Q^{n-1}} \vert B \xi\vert=\max_{\xi\in \mathbb S^{n-1}}\vert B A^{1/2} \xi\vert >1,
$$
then
there exists  $(x,X)\in \mathbb Q^{n-1}\times\mathbb Q^{n-1}$ such that the discriminant of \eqref{D1}, respectively \eqref{D2}, is less then zero. On the other hand,
if $\max_{\xi\in \mathbb S^{n-1}}\vert B^T A^{1/2} \xi\vert  \le 1$ and  $\max_{\xi\in \mathbb S^{n-1}}\vert B A^{1/2} \xi\vert  \le 1$, the discriminants are greater then zero and
we have real multipliers as functions on $\mathbb Q^{n-1}\times\mathbb Q^{n-1}(x,X)$. Further, if the above
relations define the mapping $\psi\colon M_{1,1}\to M_{1,1}$,
we have
\begin{align}
X&=(B^T)^{-1} y-\nu (AB^T)^{-1} x,              \label{1nova+}\\
Y&=-Bx-\mu \A X=-Bx-\mu\A((B^T)^{-1} y-\nu (AB^T)^{-1}x), \label{2nova+}
\end{align}
for some multipliers $\nu$, $\mu$, now functions on $M_{1,1}(x,y)$.
From \eqref{1nova+} and the constraint $\langle\A X,X\rangle=1$,
we get
\begin{align*}
\nu^2\langle \A (AB^T)^{-1} x,(AB^T)^{-1} x\rangle
-2\nu\langle
(B^TA)^{-1}y,(AB^T)^{-1}x \rangle+
 \vert (B^TA^{1/2})^{-1}y\vert^2=1.
\end{align*}

Again, if
$$
\max_{\xi\in \mathbb S^{n-1}}\vert (B^TA^{1/2})^{-1}\xi\vert=\max_{\xi\in \mathbb S^{n-1}} 1/\vert B^TA^{1/2}\xi\vert >1,
$$
there exists  $(x,y)\in M$ such that the discriminant of the above quadratic equation is less then zero.
Thus, in that case, \eqref{1nova} defines a dynamics for complexified objects only.
Therefore, we obtain the necessary condition
$\vert B^T A^{1/2}\vert=1$.
A similar analysis for $(x,y)$ to be expressed as functions of $(X,Y)$, leads to the condition $\vert B A^{1/2}\vert=1$.
\end{proof}


\begin{proof}[Proof of Theorem \ref{treca}] We have

\begin{align*}
\mathcal A_k\mathcal L_k &=A^2+\lambda K_1+\lambda^2 K_2+\lambda^3 c\,
K_3+\lambda^4 c^2 K_4, \\
\mathcal L_{k+1}\mathcal A_k &=A^2+\lambda
S_1+\lambda^2 S_2+\lambda^3 c\, S_3+\lambda^4 c^2 S_4,
\end{align*}
where
\begin{align*}
K_1 =&  Fy_k \otimes A F x_k-Fx_k\otimes A F y_{k-1}+ A F
y_{k-1}\otimes Fx_k-A F x_k\otimes Fy_{k-1},\\
K_2 =&  - c\, Fy_k \otimes AF y_{k-1}- c\,AFy_{k-1}\otimes
Fy_{k-1}+\langle y_{k-1},x_k\rangle Fx_k\otimes Fy_{k-1}\\
& -\langle y_{k-1},y_{k-1}\rangle Fx_k\otimes Fx_k+\langle
x_k,y_{k-1}\rangle Fy_k\otimes Fx_k-\langle x_k,x_k\rangle
Fy_k\otimes Fy_{k-1},\\
K_3 =& \langle y_{k-1},y_{k-1}\rangle Fx_k\otimes
Fy_{k-1}-\langle
y_{k-1},y_{k-1}\rangle Fy_k\otimes Fx_k,\\
K_4 =& \langle y_{k-1},y_{k-1}\rangle Fy_k\otimes Fy_{k-1},
\end{align*}
and
\begin{align*}
S_1 =&  Fy_k \otimes A F x_{k+1}-Fx_{k+1}\otimes A F y_{k}+ A F
y_{k}\otimes Fx_k-A F x_k\otimes Fy_{k-1},\\
S_2 =&  - c\, Fy_k \otimes A F y_{k}-c\, A Fy_{k}\otimes
Fy_{k-1}+\langle y_{k},x_k\rangle Fx_{k+1}\otimes Fy_{k-1}\\
& -\langle x_{k+1},x_{k}\rangle Fy_k\otimes Fy_{k-1}-\langle
y_k,y_{k}\rangle Fx_{k+1}\otimes Fx_k+\langle x_{k+1},y_k\rangle
Fy_k\otimes Fx_{k},\\
S_3 =& \langle y_{k},y_{k}\rangle Fx_{k+1}\otimes
Fy_{k-1}-\langle
x_{k+1},y_{k}\rangle Fy_k\otimes Fy_{k-1}\\
&+\langle y_{k},x_{k} \rangle Fy_{k}\otimes Fy_{k-1}-\langle
y_{k},y_{k}\rangle Fy_k\otimes Fx_{k}, \\
S_4 =& \langle y_{k},y_{k}\rangle Fy_k\otimes Fy_{k-1}.
\end{align*}

It is evident that $K_4=S_4$. From \eqref{1bilijar},
\eqref{2bilijar} we obtain
\begin{align*}
K_1-S_1 =&  Fy_k \otimes A F x_k-Fx_k\otimes A F y_{k-1}+ A F
y_{k-1}\otimes Fx_k \\
& -Fy_k \otimes A F(x_k+\mu_ky_k)+F(x_k+\mu_ky_k)\otimes A F
y_{k}- A F
y_{k}\otimes Fx_k\\
 = &(A F y_{k-1}-A F y_k) \wedge Fx_k= -\nu_{k-1} A F A^{-1} x_k\wedge Fx_k=0,
\end{align*}
\begin{align*}
K_2-S_2 =& c\,\big( Fy_k \otimes (A F y_{k}- A F y_{k-1}) +(A Fy_{k}
-A Fy_{k-1})\otimes Fy_{k-1}\big) \\
&+\langle y_{k-1},x_k\rangle Fx_k\otimes Fy_{k-1}-\langle
y_{k-1},y_{k-1}\rangle Fx_k\otimes Fx_k\\
&+\langle x_k,y_{k-1}\rangle Fy_k\otimes Fx_k-\langle
x_k,x_k\rangle
Fy_k\otimes Fy_{k-1}\\
& -\langle y_{k},x_k\rangle F(x_k+\mu_ky_k)\otimes
Fy_{k-1}+\langle x_k+\mu_ky_k,x_{k}\rangle Fy_k\otimes
Fy_{k-1}\\
&+\langle y_k,y_{k}\rangle F(x_k+\mu_ky_k)\otimes Fx_k-\langle
x_k+\mu_ky_k,y_k\rangle
Fy_k\otimes Fx_{k},
\end{align*}
that is,
\begin{align*}
K_2-S_2=&c\,\big(\nu_{k-1}Fy_k \otimes Fx_k  +\nu_{k-1} Fx_k\otimes Fy_{k-1}\big)\\
&+\langle y_{k-1}-y_k,x_k\rangle Fx_k\otimes Fy_{k-1}+\langle
x_k,y_{k-1}-y_k\rangle Fy_k\otimes Fx_k\\
=&c\,\big(\nu_{k-1}Fy_k \otimes Fx_k  +\nu_{k-1} Fx_k\otimes Fy_{k-1}\big)\\
&+\langle -\nu_{k-1}A^{-1}x_k,x_k\rangle Fx_k\otimes
Fy_{k-1}+\langle x_k,-\nu_{k-1}A^{-1}x_k\rangle Fy_k\otimes
Fx_k=0,
\end{align*}
and
\begin{align*}
K_3-S_3 =& \langle y_{k-1},y_{k-1}\rangle Fx_k\otimes
Fy_{k-1}-\langle y_{k-1},y_{k-1}\rangle Fy_k\otimes Fx_k\\
&-\langle y_{k},y_{k}\rangle F(x_k+\mu_ky_k)\otimes
Fy_{k-1}+\langle
x_k+\mu_ky_k,y_{k}\rangle Fy_k\otimes Fy_{k-1}\\
&-\langle y_{k},x_{k} \rangle Fy_{k}\otimes Fy_{k-1}+\langle
y_{k},y_{k}\rangle Fy_k\otimes Fx_{k}=0.\qedhere
\end{align*}
\end{proof}

\begin{proof}[Proof of Lemma \ref{komutativna}]
Let $\mathbb L_1^{-1}(q,p)=\{(q,Q_1),(q,Q_2)\}$,$(Q_i,\bar Q_i)=\Phi(q,Q_i)$, $(Q_i,P_i)=\mathbb L_2(q,Q_i)$, $(Q_i,\bar P_i)=\mathbb L_1(Q_i,\bar Q_i)$,
i.e.,
\begin{eqnarray*}
&& p=EJQ_i- \frac{1}c \langle JQ_i,q\rangle Eq, \\
&& P_i=-EJq+\frac{1}c \langle JQ_i,q\rangle EQ_i,\\
&& \bar Q_i=-q+2\frac{\langle J^{-1}q,Q_i\rangle}{\langle J^{-2} Q_i,Q_i\rangle} J^{-1}Q_i, \\
&& \bar P_i=EJ\bar Q_i- \frac{1}c \langle J\bar Q_i,Q_i\rangle E Q_i, \qquad i=1,2.
\end{eqnarray*}

Now, Lemma follows from the identity
\begin{align*}
\bar P_i &=-EJq+2\frac{\langle J^{-1}q,Q_i\rangle}{\langle J^{-2} Q_i,Q_i\rangle} E Q_i- \frac{1}c \langle -Jq+2\frac{\langle J^{-1}q,Q_i\rangle}{\langle J^{-2} Q_i,Q_i\rangle}Q_i,Q_i\rangle E Q_i\\
&=-EJq+\frac{1}c \langle JQ_i,q\rangle EQ_i=P_i.\qedhere
\end{align*}
\end{proof}

\begin{proof}[Proof of Theorem \ref{deseta}]

(i) From \eqref{n01} and \eqref{n02} we have
\begin{align*}
\langle E q_{k+1},p_{k+1}\rangle =&\langle J^{-1}p_k+\mu_k E J^{-1}q_k,  -EJq_{k}+\tilde\mu_k Eq_{k+1}\rangle\\
=&-\langle E q_k,p_k\rangle -\mu_k\langle q_k,q_k\rangle+\mu_k\tilde{\mu}_k\langle J^{-1}q_k,q_{k+1}\rangle
+\tilde{\mu}_k\langle J^{-1}p_k,Eq_{k+1}\rangle\\
=&-\langle E q_k,p_k\rangle +\tilde{\mu}_k \langle EJ^{-1}p_k+\mu_k J^{-1}q_k,q_{k+1}\rangle \\
=&-\langle E q_k,p_k\rangle +\tilde{\mu}_k\langle q_{k+1},q_{k+1}\rangle=-\langle E q_k,p_k\rangle.
\end{align*}

(ii) Consider the graph $\Gamma_\kappa$ of the correspondence $\Psi\vert_{\Sigma_k}$:
$$
\Gamma_\kappa \subset \Sigma_\kappa(q,p)\times\Sigma_\kappa(Q,P) \subset \R^{2n}(q,p)\times\R^{2n}(Q,P)
$$

Note that the generating function $S=\langle q,J Q\rangle$ of the mappings $\mathbb L_1^0$, $\mathbb L_2^0$ satisfies $S=\pm \kappa\vert_{\Gamma_\kappa}$, i.e, $dS\vert_{\Gamma_\kappa}=0$. Therefore,
\begin{align*}
PdQ-pdq &= -dS+\tilde\mu EQdQ+\mu Eq dq=\frac{\tilde\mu}{2}d\langle
Q,Q\rangle+\frac{\mu}{2}d\langle q,q\rangle=0\vert_{\Gamma_\kappa}.
\end{align*}

Thus,  the Heisenberg  system restricted to $\Sigma_\kappa$ preserves
the 1-form $\theta=pdq\vert_{\Sigma_\kappa}$:
$$
(\Psi\vert_{\Sigma_\kappa})^*\theta=\theta.
$$

The Hamiltonian flow of $\varphi_2=\langle E p,q\rangle$ with
respect to the Dirac--Poisson bracket $\{\cdot,\cdot\}_D^0$ equals
\[
X=\sum_i q_i {\partial}/{\partial q_i}-p_i
{\partial}/{\partial p_i}.
\]
The submanifold $\Sigma_\kappa$ is a contact manifold with respect to
$\theta$, if and only if $\theta(X)\ne 0$ on $\Sigma_\kappa$ (e.g., see
\cite{LM}). We have $\theta(X)=\varphi_2=\pm \kappa$. Therefore,
$\Sigma_\kappa$ is a contact manifold with respect to $\theta$ for
$\kappa\ne 0$ with the Reeb vector field $Z=\pm\frac{1}{\kappa} X$.

Next, since $\{\varphi_2,f_i\}^0_D=0$, $i=1,\dots,n$, where
integrals $f_i$ are given by \eqref{INTEGRALI}, from $\{f_i,f_j\}^0_D=0$, using the theorem on {\it isoenergetic
integrability} (see \cite{JJ2}), we get that the restrictions of
$f_i\vert_{\Sigma_\kappa}$ commute with respect to the Jacobi bracket
on $(\Sigma_\kappa,\theta)$.
We have
two relations
\[
\sum_i f_i\vert_{\Sigma_\kappa}=0,\qquad
\sum_i J_i^2 f_i\vert_{\Sigma_\kappa}+\kappa^2=0,
\]
and there are
$n-2$ independent integrals on $\Sigma_\kappa$. Thus, the mapping $\Psi$
is a completely integrable contact 1:2 correspondence (see \cite{KT2,
JJ3}).
\end{proof}

\subsection*{Acknowledgments}
The research of B. J. was supported by the Serbian Ministry of
Science Project 174020, Geometry and Topology of Manifolds,
Classical Mechanics and Integrable Dynamical Systems.

\

\end{document}